\documentclass[12pt]{amsart}

\usepackage{amsmath,amssymb,amsthm,amstext,amsfonts}
\usepackage{physics}
\usepackage{mathtools}
\usepackage{hyperref}
\usepackage{graphicx}
\usepackage{bm}
\usepackage{geometry}
\usepackage{enumerate}
\usepackage{xcolor}

\numberwithin{equation}{section}

\newtheorem{theorem}{Theorem}[section]
\newtheorem{corollary}[theorem]{Corollary}
\newtheorem{proposition}[theorem]{Proposition}
\newtheorem{lemma}[theorem]{Lemma}
\newtheorem*{theorem*}{Theorem}

\theoremstyle{definition}
\newtheorem{definition}[theorem]{Definition}
\newtheorem{example}[theorem]{Example}

\theoremstyle{remark}
\newtheorem{remark}[theorem]{Remark}

\begin{document}

\title[Hamiltonian Simulation]{Complexity Bounds for Hamiltonian Simulation in Unitary Representations}

\author{Naihuan Jing}
\address{Department of Mathematics,
   North Carolina State University,
   Raleigh, NC 27695, USA}
\email{jing@ncsu.edu}
\author{Molena Nguyen}
\address{Department of Mathematics,
   North Carolina State University,
   Raleigh, NC 27695, USA}
\email{thnguy22@ncsu.edu}

\keywords{Hamiltonian simulation, quantum complexity, unitary representations}
\thanks{*
Partially supported by Simons Foundation (grant no. MP-TSM-00002518)\\*Corresponding author: Naihuan Jing.}
\subjclass[2010]{Primary: 81P68; Secondary: 81Qxx, 68Q12}

\date{November 18, 2025}


\begin{abstract}
We develop a representation-theoretic framework for Hamiltonian simulation on a compact semisimple Lie group $G$. For any unitary representation $\rho$
on a finite-dimensional Hilbert space \(V\) with differential \(d\rho : \mathfrak{g} \to \mathfrak{u}(V)\) for the Lie algebra $\mathfrak g$, we consider the Hamiltonian evolution
\[
U_X(t) \coloneqq \rho(\exp(tX)) = e^{t\,d\rho(X)}, \qquad t\in\mathbb{R}.
\]
For any $X\in\mathfrak{g}$ and its complexification $
X_\mathbb{C} = X_0 + \sum\limits_{\alpha\in\Delta} x_\alpha E_\alpha
$ associated with the root system $\Delta$,
we introduce the numerical invariants \emph{root activity} and \emph{root curvature} functionals
\begin{align*}
\mathcal{A}_p(X)
&\coloneqq
\Bigl(\sum_{\alpha\in\Delta} |x_\alpha|^p \,\|d\rho(E_\alpha)\|_{\mathrm{op}}^p\Bigr)^{1/p},
\quad 1\le p<\infty\\
\mathcal{C}(X)
&\coloneqq
\Bigl(\sum_{\alpha\in\Delta}
|\alpha(X_0)|^2\,|x_\alpha|^2 \,\|d\rho(E_\alpha)\|_{\mathrm{op}}^2\Bigr)^{1/2},
\end{align*}
where \(\|\cdot\|_{\mathrm{op}}\) is the operator norm on \(\mathrm{End}(V)\).
We first describe
how the Hamiltonian \(d\rho(X)\) is distributed along the directions of root spaces $\mathfrak{g}_\alpha$.

Our main analytic result provides a local error bound for the symmetric torus--root splitting, that is, for the approximation
\[
e^{t(d\rho(X_0)+d\rho(X_{\mathrm{root}}))}
\approx
e^{\frac{t}{2}d\rho(X_0)} e^{t d\rho(X_{\mathrm{root}})} e^{\frac{t}{2}d\rho(X_0)}.\]
Specifically, we show that for each fixed \(X\in\mathfrak{g}\) there exists a constant \(C_X>0\) such that
\[
\bigl\|
e^{t(d\rho(X_0)+d\rho(X_{\mathrm{root}}))}
-
e^{\frac{t}{2}d\rho(X_0)} e^{t d\rho(X_{\mathrm{root}})} e^{\frac{t}{2}d\rho(X_0)}
\bigr\|_{\mathrm{op}}
\le C_X\,t^{3}\,\bigl(\mathcal{C}(X)+\mathcal{A}_1(X_{\mathrm{root}})\bigr)
\]
for all sufficiently small \(|t|\), where \(C_X\) depends on \(X\) and on \((\mathfrak{g},\rho)\), but not on \(\dim V\) beyond its appearance in \(\rho\).  Thus the leading constant in the \(t^3-\)term is expressed directly in terms of the root data \(\{x_\alpha\}\), \(\{\alpha(X_0)\}\), and \(\{d\rho(E_\alpha)\}\), rather than global norms of \(X\) or nested commutators, yielding refined Trotter--Suzuki error bounds that reflect the root-space structure.
We also introduce a root-gate circuit model whose elementary gates are one-parameter exponentials \(\exp(s\,d\rho(H))\) for \(H\in\mathfrak{t}\) and \(\exp(s\,d\rho(E_\alpha))\) for \(\alpha\in\Delta\), with parameter \(s\in\mathbb{R}\), the coefficients \(x_\alpha\) and \(\mathcal{A}_1(X_{\mathrm{root}})\) quantify the flow along the root directions and suggest a bound \(t\,\mathcal{A}_1(X_{\mathrm{root}})\) for simulation depth.  Finally, we test this on spin$-$chain Hamiltonians on \((\mathbb{C}^2)^{\otimes n}\subset\mathfrak{su}(2^n)\), where root spaces are spanned by matrix units, \(\mathcal{A}_p\), and \(\mathcal{C}\), which gives sharper complexity bounds and dimension$-$free representation$-$theoretic invariants.

\end{abstract}

\maketitle

\tableofcontents

\section{Introduction}
\label{sec:intro}

Hamiltonian simulation lies at the interface of quantum computation, numerical analysis, and Lie theory.  Let \(\mathcal{H}\) be a finite-dimensional complex Hilbert space and let \(H\) be a time-independent self-adjoint operator on \(\mathcal{H}\) (the Hamiltonian).  The associated unitary time evolution is
\[
U(t) \coloneqq e^{-i t H}, \qquad t \in \mathbb{R}.
\]
A basic algorithmic task is to approximate \(U(t)\), up to a prescribed accuracy, using a fixed finite family of elementary unitary operators on \(\mathcal{H}\) (often called \emph{gates}).

The modern theory of Hamiltonian simulation goes back to Lloyd's universality theorem \cite{lloyd1996universal}, which shows that local Hamiltonian dynamics generated by sums of terms acting on only a few degrees of freedom can be approximated by finite products of short-time evolutions generated by those local terms.  Since then, a large body of work has developed algorithmic methods for simulating many-body and quantum-chemistry Hamiltonians, beginning with schemes based on Lie--Trotter and Strang product formulas \cite{trotter1959product,berry2007efficient,aspuruguzik2005,georgescu2014quantum,whitfield2011,kassal2011,cao2019quantumchem} and later including approaches based on Taylor-series expansions, qubitization, and quantum signal processing \cite{berry2015simulating,childs2018toward,low2017optimal,low2019hamiltonian,babbush2018encoding,childs2012hamiltonian}.  These methods give asymptotically optimal bounds on the number of required gates as a function of the evolution time \(t\) and the target precision, and they underlie many proposed quantum algorithms in chemistry, condensed-matter physics, and lattice models; see, for example, the surveys \cite{georgescu2014quantum,cao2019quantumchem}.

Many of the known algorithms are built from \emph{product formulas}.  In their simplest form these are the Lie--Trotter and Strang splittings, together with higher-order Suzuki refinements \cite{suzuki1990fractal,blanes2009magnus,blanescasas2016}, which approximate the exponential of a sum of operators by products of exponentials of the individual summands.  The associated error bounds are usually expressed in terms of operator norms of the Hamiltonian \(H\) and of its nested commutators, and can be derived from the Baker--Campbell--Hausdorff (BCH) and Magnus expansions \cite{iserles2000lie,hairer2006geometric,lubich2008,hochbruckostermann2010,magnus1954}.  Recent work has sharpened these Trotter error estimates for local lattice Hamiltonians and sparse systems \cite{childs2021theory,haah2018product,campbell2019random}, and has clarified how higher-order commutators contribute to the cost of simulation.  From the numerical-analysis point of view, these developments place Hamiltonian simulation within the general theory of geometric integrators for differential equations on Lie groups and manifolds \cite{blanes2009magnus,hairer2006geometric,lubich2008,sanzsernacalvo1994}.

Beyond such norm-based bounds, there is growing interest in understanding Hamiltonian simulation from a more \emph{geometric} and \emph{representation-theoretic} point of view.  In many applications the Hamiltonian \(H\) arises as
\[
H = d\rho(X)
\]
for some element \(X\) of a compact semisimple Lie algebra \(\mathfrak{g}\), where \(\rho\) is a unitary representation of the corresponding Lie group.  In this setting the additional symmetry suggests decompositions of \(X\) adapted to a Cartan subalgebra and the associated root-space decomposition, and it links simulation questions to geometric control theory on Lie groups.  For two-qubit and few-body systems, Cartan-type decompositions of \(\mathrm{SU}(4)\) and related groups have been used to classify nonlocal gates and to analyze time-optimal control \cite{khaneja2001time,khaneja2001cartan,zhang2003geometric,nielsen2006geometry}, and similar ideas appear in the study of controllability for spin systems \cite{khaneja2001time,albertinidalessandro2003,dalessandro2007}.

In another direction, K\"ok\c{c}u \emph{et al.}\ introduced a fixed-depth framework for Hamiltonian simulation based on Cartan decompositions \cite{kokcu2022fixeddepth}, showing that certain structured Hamiltonians admit exact or approximate simulations by circuits whose depth is bounded independently of the dimension of \(\mathcal{H}\); their analysis relates algebraic features such as sparsity and commutation relations to the difficulty of simulating the dynamics.  Finally, a series of works by Moody Chu \cite{chu2022geometry,Chu2023LaxDynamics,Chu2025PreparingHamiltonians} proposed a geometric approach to matrix exponentials and factorizations, viewing unitary evolutions as curves on Lie groups and symmetric spaces; in that setting curvature and commutators appear naturally in error bounds and factorization schemes.  The present paper is motivated by these geometric perspectives and develops curvature-sensitive estimates for Hamiltonian simulation in a representation-theoretic framework.

\subsection{Representation-theoretic setting and notation}

Throughout the paper we fix the following data, which are also used in the abstract.

\begin{itemize}
\item \(G\) denotes a connected compact semisimple Lie group with Lie algebra \(\mathfrak{g}\).
\item \(V\) is a finite-dimensional complex Hilbert space, with inner product \(\langle \cdot,\cdot\rangle_V\) and associated norm \(\|\cdot\|_V\).
\item \(\rho : G \to U(V)\) is a unitary representation of \(G\) on \(V\).
\item The differential of \(\rho\) is the Lie algebra representation
\[
d\rho : \mathfrak{g} \longrightarrow \mathfrak{u}(V),
\]
where \(\mathfrak{u}(V)\subset EndV)\) is the Lie algebra of skew-Hermitian operators on \(V\),
\[
\mathfrak{u}(V) \coloneqq \{ A \in EndV) : A^\ast = -A\}.
\]
\end{itemize}

For each element \(X\in\mathfrak{g}\) we write
\begin{equation}\label{eq:Ux-def}
U_X(t) \coloneqq \rho(\exp(tX)) = e^{t\,d\rho(X)}, \qquad t\in\mathbb{R},
\end{equation}
for the corresponding unitary evolution.  In physical language, the self-adjoint operator
\begin{equation}\label{eq:Hx-def}
H_X \coloneqq -i\,d\rho(X)
\end{equation}
is the Hamiltonian, and \(U_X(t)\) is the Schr\"odinger evolution generated by \(H_X\); see, for example, \cite{georgescu2014quantum,nielsen2000,varadarajan1974}.

To measure errors in Hamiltonian simulation we use the operator norm
\begin{equation}\label{eq:op-norm}
\|A\|_{\mathrm{op}} \coloneqq \sup_{\|v\|_V=1} \|A v\|_V,
\qquad A\in End(V).
\end{equation}
When no confusion is possible we simply write \(\|\cdot\|\) for \(\|\cdot\|_{\mathrm{op}}\).

A central algorithmic problem considered in this paper is the following qualitative question.

\begin{definition}[Hamiltonian simulation problem]\label{def:HS-problem}
Let \(X\in\mathfrak{g}\) and \(t\in\mathbb{R}\), and let \(\varepsilon>0\).  The Hamiltonian simulation problem for the pair \((X,t)\) is to construct a product
\[
W = U_1 U_2 \cdots U_N, \qquad U_j \in \mathcal{G},
\]
from a prescribed gate set \(\mathcal{G}\subset U(V)\) such that
\[
\|W - U_X(t)\|_{\mathrm{op}} \le \varepsilon,
\]
while keeping the length \(N\) as small as possible.  The dependence of \(N\) on \(\varepsilon\), \(t\), and structural data associated with \(X\) is the \emph{simulation complexity} of \(U_X(t)\) in the gate model \(\mathcal{G}\).
\end{definition}

The particular gate sets we consider later are adapted to the root structure of \(\mathfrak{g}\) and to the representation \(\rho\).

\subsection{Overview}
\label{subsec:intro-overview}

We now describe the main constructions and results of the paper in the representation-theoretic setting above.  Throughout, we fix \(G\), \(\mathfrak{g}\), \(V\), and \(\rho\) as in the previous subsection. We first recall some standard notations from Lie theory.

\medskip

\noindent\textbf{Root-space decomposition and normalization.}
We equip \(\mathfrak{g}\) with an \(Ad(G)\)-invariant inner product, for instance the negative of the Killing form, and use it to identify \(\mathfrak{g}\) with its dual \(\mathfrak{g}^\ast\); see \cite{helgason1978differential,knapp2002lie,bourbaki2005,hilgertneeb2012,fultonharris1991}.  Fix a maximal torus
\[
T \subset G
\]
with Lie algebra \(\mathfrak{t} \subset \mathfrak{g}\).  Denote by
\[
\mathfrak{g}_\mathbb{C} \coloneqq \mathfrak{g} \otimes_\mathbb{R} \mathbb{C},
\qquad
\mathfrak{t}_\mathbb{C} \coloneqq \mathfrak{t} \otimes_\mathbb{R} \mathbb{C}
\]
the complexifications of \(\mathfrak{g}\) and \(\mathfrak{t}\), and by \(\mathfrak{t}_\mathbb{C}^\ast\) the complex dual of \(\mathfrak{t}_\mathbb{C}\).

The root system
\[
\Delta \subset \mathfrak{t}_\mathbb{C}^\ast
\]
of the pair \((\mathfrak{g}_\mathbb{C},\mathfrak{t}_\mathbb{C})\) is defined in the usual way: for each nonzero linear functional \(\alpha \in \mathfrak{t}_\mathbb{C}^\ast\) such that the corresponding root space
\[
\mathfrak{g}_\alpha
\coloneqq
\{ Y \in \mathfrak{g}_\mathbb{C} : [H,Y] = \alpha(H)\, Y \ \text{for all } H \in \mathfrak{t}_\mathbb{C} \}
\]
is nontrivial, we call \(\alpha\) a \emph{root} and include it in \(\Delta\).  The complexified Lie algebra admits the root-space decomposition
\begin{equation}\label{eq:root-space-decomp}
\mathfrak{g}_\mathbb{C}
=
\mathfrak{t}_\mathbb{C} \oplus \bigoplus_{\alpha \in \Delta} \mathfrak{g}_\alpha.
\end{equation}
For each \(\alpha \in \Delta\) we fix a nonzero \emph{root vector}
\[
E_\alpha \in \mathfrak{g}_\alpha,
\]
chosen so that the root spaces are pairwise orthogonal with respect to the complex-bilinear extension of our \(Ad(G)\)-invariant inner product on \(\mathfrak{g}\).  This choice of root vectors \(\{E_\alpha\}_{\alpha\in\Delta}\) is fixed throughout the paper.

\medskip

\noindent\textbf{Decomposition of Hamiltonian generators.}
Let \(X \in \mathfrak{g}\) be a fixed Hamiltonian generator, so that \(H_X = -i\,d\rho(X)\) is the corresponding Hamiltonian on \(V\).  Write \(X_\mathbb{C} \in \mathfrak{g}_\mathbb{C}\) for its complexification.  With respect to the decomposition \eqref{eq:root-space-decomp}, there exist unique coefficients
\[
X_0 \in \mathfrak{t}_\mathbb{C},
\qquad
x_\alpha \in \mathbb{C} \ \text{for each } \alpha \in \Delta
\]
such that
\begin{equation}\label{eq:root-decomp-intro}
X_\mathbb{C}
=
X_0 + X_{\mathrm{root},\mathbb{C}},
\qquad
X_{\mathrm{root},\mathbb{C}} \coloneqq \sum_{\alpha\in\Delta} x_\alpha E_\alpha.
\end{equation}
There is a unique real element in \(\mathfrak{t}\) whose complexification is \(X_0\); for simplicity we again denote this element by \(X_0\).  We then define the \emph{root component} of \(X\) by
\[
X_{\mathrm{root}} \coloneqq X - X_0 \in \mathfrak{g},
\]
so that the complexification of \(X_{\mathrm{root}}\) is \(X_{\mathrm{root},\mathbb{C}}\).  The pair \((X_0,X_{\mathrm{root}})\) is the toral--root decomposition of the generator \(X\) with respect to the chosen Cartan subalgebra \(\mathfrak{t}\).  This decomposition, formulated rigorously in Section~\ref{sec:root-profiles}, agrees with the one used in the abstract.

Applying the differential representation, we obtain operators
\[
d\rho(X_0), \quad d\rho(X_{\mathrm{root}}) \in \mathfrak{u}(V),
\qquad
d\rho(E_\alpha) \in EndV) \ \text{for } \alpha \in \Delta,
\]
which we measure in the operator norm \(\|\cdot\|_{\mathrm{op}}\) defined in \eqref{eq:op-norm}.  The quantities \(\|d\rho(E_\alpha)\|_{\mathrm{op}}\) depend only on the representation \(\rho\) and on the normalization of the root vectors \(\{E_\alpha\}\).

\medskip

\noindent\textbf{Root activity and root curvature.}
Our first set of invariants describes how the Hamiltonian generator \(X\) is distributed along the various root directions from the point of view of the representation \(\rho\).

For \(1 \le p \le \infty\) we define the \emph{root activity} of \(X\) by
\[
\mathcal{A}_p(X)
\coloneqq
\Biggl(\sum_{\alpha\in\Delta} |x_\alpha|^p\,
\|d\rho(E_\alpha)\|_{\mathrm{op}}^p\Biggr)^{1/p},
\quad
\mathcal{A}_\infty(X)
\coloneqq
\sup_{\alpha\in\Delta}
\bigl(|x_\alpha|\,\|d\rho(E_\alpha)\|_{\mathrm{op}}\bigr).
\]
The quantity \(\mathcal{A}_p(X)\) measures the size of the root component \(X_{\mathrm{root}}\) in the basis of root vectors, weighted by the norms of the corresponding operators \(d\rho(E_\alpha)\).

The \emph{root curvature functional} is defined by
\[
\mathcal{C}(X)
\coloneqq
\Biggl(\sum_{\alpha\in\Delta}
\bigl|\alpha(X_0)\bigr|^2\,|x_\alpha|^2\,
\|d\rho(E_\alpha)\|_{\mathrm{op}}^2\Biggr)^{1/2},
\]
where \(\alpha(X_0)\) denotes the evaluation of the root \(\alpha \in \mathfrak{t}_\mathbb{C}^\ast\) on the toral component \(X_0 \in \mathfrak{t}_\mathbb{C}\), via the fixed identification of \(\mathfrak{t}\) with \(\mathfrak{t}^\ast\).  This functional couples the toral and root parts of \(X\): it is small when either the coefficients \(x_\alpha\) are small or the root evaluations \(\alpha(X_0)\) are small.  It arises naturally when estimating commutators of the form \([d\rho(X_0),d\rho(X_{\mathrm{root}})]\), which control Trotter errors for Hamiltonian simulation.

Both \(\mathcal{A}_p(X)\) and \(\mathcal{C}(X)\) are numerical invariants of the pair \((\rho,X)\), built from the root data \(\{x_\alpha\}\), the toral evaluations \(\{\alpha(X_0)\}\), and the operators \(\{d\rho(E_\alpha)\}\).  They will play a central role in our error bounds and complexity estimates, beginning in Section~\ref{sec:root-profiles}.

\medskip

\noindent\textbf{A curvature-sensitive error bound for torus--root splitting.}
Our first main theorem concerns a symmetric product formula that separates the toral and root parts of the generator \(X\).
Consider the exact evolution
\[
e^{t d\rho(X)}
=
e^{t(d\rho(X_0)+d\rho(X_{\mathrm{root}}))}
\]
and the second-order symmetric product
\[
S(t)
\coloneqq
e^{\frac{t}{2}d\rho(X_0)}\, e^{t d\rho(X_{\mathrm{root}})}\, e^{\frac{t}{2}d\rho(X_0)},
\qquad t \in \mathbb{R}.
\]
This is the analogue, adapted to the decomposition \(X=X_0+X_{\mathrm{root}}\), of the classical Strang splitting.

\begin{theorem*}[cf.\ Theorem~\ref{thm:curvature-error}]
Let \(X \in \mathfrak{g}\) and decompose \(X = X_0 + X_{\mathrm{root}}\) as above.  Then there exist constants \(t_0(X) > 0\) and \(C(X) > 0\), depending on \(\mathfrak{g}\), \(\rho\), and \(X\), such that for all real \(t\) with \(|t| \le t_0(X)\),
\[
\bigl\| e^{t d\rho(X)} - S(t) \bigr\|_{\mathrm{op}}
\le
C(X)\,|t|^{3}\,\bigl(\mathcal{C}(X)+\mathcal{A}_1(X_{\mathrm{root}})\bigr).
\]
\end{theorem*}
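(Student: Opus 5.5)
Write \(A\coloneqq d\rho(X_0)\) and \(B\coloneqq d\rho(X_{\mathrm{root}})\); both are skew-Hermitian. The plan is to use the standard integral (variation-of-constants) representation of the Strang splitting error. This reduces everything to nested commutators \([A,[A,B]]\) and \([B,[B,A]]\). The linear dependence on \(\mathcal{C}(X)+\mathcal{A}_1(X_{\mathrm{root}})\) then comes from expanding those commutators in root coordinates and absorbing the remaining factors into \(C(X)\). This is allowed because \(C(X)\) may depend on \(X\).

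\textbf{Step 1: error representation.} Set \(U(t)=e^{t(A+B)}\) and \(S(t)=e^{\frac t2A}e^{tB}e^{\frac t2A}\). Differentiating gives \(S'(t)=(A+B)S(t)+R(t)S(t)\), where
\[
R(t)=\tfrac12\bigl(e^{\frac t2 A}Be^{-\frac t2 A}-B\bigr)+\tfrac12\bigl(e^{\frac t2A}e^{tB}Ae^{-tB}e^{-\frac t2A}-A\bigr)+\bigl(e^{\frac t2 A}Be^{-\frac t2 A}-B\bigr)\cdot 0 .
\]
Rather than track this form, I will use the cleaner approach: write \(S(t)-U(t)=\int_0^t U(t-s)R(s)S(s)\,ds\). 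By unitarity, \(\|S(t)-U(t)\|\le\int_0^{|t|}\|R(s)\|\,ds\). Since the symmetric scheme is second order, \(R(0)=R'(0)=0\). Taylor's theorem with integral remainder then gives
\[
\|R(s)\|\le c\,s^2\bigl(\|[A,[A,B]]\|+\|[B,[B,A]]\|\bigr)
\]
for \(|s|\le t_0\). Here \(c\) is an absolute constant, because every conjugation by \(e^{\tau A}\) or \(e^{\tau B}\) is norm-preserving. This yields
\[
\|U(t)-S(t)\|\le c'|t|^3\bigl(\|[A,[A,B]]\|+\|[B,[B,A]]\|\bigr),
\]
which is the usual Strang commutator bound.

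\textbf{Step 2: root-coordinate estimates.} From \([X_0,E_\alpha]=\alpha(X_0)E_\alpha\) we get \([A,B]=\sum_\alpha x_\alpha\alpha(X_0)\,d\rho(E_\alpha)\). The triangle inequality and Cauchy--Schwarz then give \(\|[A,B]\|\le |\Delta|^{1/2}\,\mathcal{C}(X)\). Using \(\|[A,Y]\|\le 2\|A\|\|Y\|\):
\begin{align*}
\|[A,[A,B]]\|&\le 2\|A\|\,|\Delta|^{1/2}\,\mathcal{C}(X),\\
\|[B,[B,A]]\|&\le 2\|B\|\,\|[A,B]\|.
\end{align*}
For the second term I will instead bound \(\|[B,A]\|\le 2\|A\|\|B\|\) and \(\|B\|\le\sum_\alpha|x_\alpha|\|d\rho(E_\alpha)\|=\mathcal{A}_1(X_{\mathrm{root}})\). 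This gives \(\|[B,[B,A]]\|\le 4\|A\|\|B\|\,\mathcal{A}_1(X_{\mathrm{root}})\).

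\textbf{Step 3: assembling.} Set
\[
C(X)=c'\max\bigl(2\|A\||\Delta|^{1/2},\,4\|A\|\|B\|\bigr).
\]
This depends only on \(\mathfrak g\), \(\rho\) and \(X\), and with it the stated inequality follows for \(|t|\le t_0(X)\). In fact it holds for all \(t\), so \(t_0\) can be taken arbitrary.

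\textbf{Main obstacle.} The main obstacle is Step 1: obtaining a clean remainder with exactly the two double commutators and no stray first-order term. I would first try the known exact formula (e.g.\ Childs et al.) for the Strang error as a double integral of conjugated double commutators, and cite it rather than rederive \(R(t)\). Everything after that is bookkeeping with the triangle inequality in the root basis.
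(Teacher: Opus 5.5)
\paragraph{Comparison with the paper's proof.}
Your argument is correct in substance, but it takes a different route from the paper. The paper works at the level of generators. It applies the BCH lemma twice to write \(\log S(t)=t(A+B)+t^3R_{A,B}(t)\), with \(\|R_{A,B}\|\) controlled by \(\|[A,[A,B]]\|+\|[B,[A,B]]\|+\|[A,B]\|\). It then uses local Lipschitz continuity of \(\exp\) to pass back to the propagators. That route needs \(|t|\le t_0(X)\), for BCH convergence and a branch of \(\log\), and its constants depend on \(\|A\|,\|B\|\).

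You work directly with the propagators, via the Duhamel identity \(S(t)-U(t)=\int_0^tU(t-s)R(s)S(s)\,ds\) and unitarity, and then cite the known double-commutator bound for the Strang splitting. This avoids logarithms altogether, holds for all \(t\), and involves only \([A,[A,B]]\) and \([B,[B,A]]\).

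The root-coordinate step is essentially the paper's, except in how you handle \([B,[B,A]]\). You route it through \(\|B\|\le\mathcal{A}_1(X_{\mathrm{root}})\). The paper instead bounds every commutator by an \(X\)-dependent multiple of \(\mathcal{C}(X)\), e.g.\ \(\|[B,[A,B]]\|\le 2\mathcal{A}_1(X_{\mathrm{root}})\,C_{\mathrm{struct}}\,\mathcal{C}(X)\). You could do the same via \(\|[B,[B,A]]\|\le 2\|B\|\,|\Delta|^{1/2}\mathcal{C}(X)\). Both choices are legitimate, since \(C(X)\) may depend on \(X\).

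\paragraph{Slips in Step 1 to repair.}
First, your displayed \(R(t)\) is wrong. Differentiating \(S\) gives
\[
R(t)=\bigl(e^{\frac t2A}Be^{-\frac t2A}-B\bigr)+\tfrac12\bigl(e^{\frac t2A}e^{tB}Ae^{-tB}e^{-\frac t2A}-A\bigr),
\]
with coefficient \(1\) on the first bracket. With your \(\tfrac12\) you would get \(R'(0)=-\tfrac14[A,B]\neq 0\), contradicting your own claim.

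Second, naive Taylor expansion of the correct \(R\) does not give an absolute constant times the two double commutators. The second derivative \(R''(s)\) contains \([A,[A,e^{sB}Ae^{-sB}]]\) and \([A,e^{sB}[B,A]e^{-sB}]\). These produce extra terms of order \(\|A\|^2|s|\,\|[A,B]\|\) and \(\|A\|\,|s|\,\|[B,[B,A]]\|\). For the theorem as stated this is harmless: restrict to \(|t|\le t_0(X)\) and absorb these terms into \(C(X)\) using \(\|[A,B]\|\le|\Delta|^{1/2}\mathcal{C}(X)\). However, your claims ``\(c\) absolute'' and ``valid for all \(t\)'' genuinely rely on the rearranged exact error formula of Childs et al., which you rightly plan to cite.

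Third, your \(C(X)\) vanishes when \(X_0=0\). Replace it by, say, \(\max\{C(X),1\}\) to meet the requirement \(C(X)>0\); the error is zero in that case anyway.
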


Thus the leading-order error in this Hamiltonian-simulation scheme is controlled by the root curvature \(\mathcal{C}(X)\), together with the size of the root component measured by \(\mathcal{A}_1(X_{\mathrm{root}})\).  In particular, if \(\mathcal{C}(X) = 0\) then \(\alpha(X_0) = 0\) whenever \(x_\alpha \neq 0\), which implies that the commutator \([d\rho(X_0),d\rho(X_{\mathrm{root}})]\) vanishes; in this case the symmetric splitting \(S(t)\) agrees with the exact evolution \(e^{t d\rho(X)}\) for all \(t\).  For example, this occurs whenever \(X\) lies in the Cartan subalgebra \(\mathfrak{t}\), in which case \(X_{\mathrm{root}} = 0\) and both \(\mathcal{A}_p(X)\) and \(\mathcal{C}(X)\) vanish.

\medskip

In later sections we develop these ideas further, introduce a root-adapted gate model, and apply the invariants \(\mathcal{A}_p\) and \(\mathcal{C}\) to concrete families of Hamiltonians, with an emphasis on many-spin systems and their simulation complexity.

\begin{example}[Curvature-free toral Hamiltonians]\label{ex:toral-zero-curvature}
If \(X\in\mathfrak{t}\) then \(X_{\mathrm{root}}=0\) and \(\mathcal{A}_p(X)=\mathcal{C}(X)=0\) for all \(p\).  In this case the symmetric splitting for the Hamiltonian evolution
\[
S(t) = e^{\frac{t}{2}d\rho(X)}\, e^{0}\, e^{\frac{t}{2}d\rho(X)} = e^{t d\rho(X)}
\]
is exact for all \(t\).  This behaviour is consistent with Theorem~\ref{thm:curvature-error} and with Corollary~\ref{cor:toral-exact} below, and it provides a class of nontrivial Hamiltonians (including all regular elements in a Cartan subalgebra) for which the complexity of simulation is representation-theoretically trivial, while the associated weight-space structure remains rich \cite[\S9.1]{fultonharris1991}.
\end{example}

We then introduce a root-gate circuit model in which the allowed generators are elements of \(\mathfrak{t}\) and real skew-Hermitian combinations of the individual root vectors \(E_\alpha\).  In this model we prove a quantitative lower bound that shows the \(\ell^1\)-root activity \(\mathcal{A}_1(X_{\mathrm{root}})\) controls the minimal circuit length needed to approximate \(U_X(t)\) up to small error.  This connects the root profile of the Hamiltonian generator directly to lower bounds on Hamiltonian-simulation complexity in a precise and nonconjectural way; see Theorem~\ref{thm:activity-lower-bound} in Section~\ref{sec:root-gate-complexity}.

Finally, we specialize the framework to the defining representation of
\(G=\mathrm{SU}(2^n)\) on \(V = (\mathbb{C}^2)^{\otimes n}\), which is the natural setting for quantum spin-system Hamiltonians \cite{georgescu2014quantum,sachdev2011quantum,lanyon2011}.  In this case the roots and weights admit a concrete description in terms of computational basis states and matrix units.  We show how the functionals \(\mathcal{A}_p\) and \(\mathcal{C}\) can be computed explicitly for nearest-neighbour spin-chain Hamiltonians and related models, leading to refined Hamiltonian-simulation complexity estimates compared with standard bounds based only on \(\|X\|\); see \cite{aspuruguzik2005,georgescu2014quantum,whitfield2011,berry2015simulating,low2017optimal,low2019hamiltonian,childs2021theory}.

\subsection{Novelty and significance}

We briefly explain how the results above fit into, and contribute to, the existing representation-theoretic and analytic literature, with an eye toward Hamiltonian simulation.

\begin{itemize}
\item[(1)] \emph{New representation-theoretic functionals tailored to Hamiltonians.}
We introduce the root activity and root curvature functionals \(\mathcal{A}_p(X)\) and \(\mathcal{C}(X)\), built directly from the root-space coefficients \(x_\alpha\) of the Hamiltonian generator and the norms \(\|d\rho(E_\alpha)\|_{\mathrm{op}}\).  These functionals are invariant under the Weyl group and under unitary intertwiners (Propositions~\ref{prop:weyl-invariance} and \ref{prop:intertwiners}), and hence provide intrinsic invariants of the pair \((\rho,X)\).  To the best of our knowledge, such invariants have not been systematically studied with an explicit Hamiltonian-simulation motivation, despite the extensive development of root and weight theory for compact groups \cite{helgason1978differential,knapp2002lie,bourbaki2005,fultonharris1991,knapp1986rt,goodmanwallach2009,humphreys1972,serre2001,varadarajan1984,vogan1987,fuchsschweigert1997,brockertomdieck1985}.

\item[(2)] \emph{Dimension-free analytic inequalities in unitary representations.}
The error bound in Theorem~\ref{thm:curvature-error} for the symmetric torus--root splitting is expressed in terms of \(\mathcal{A}_1(X_{\mathrm{root}})\) and \(\mathcal{C}(X)\), together with constants determined by \(\mathfrak{g}\), \(\rho\), and the chosen generator \(X\).  No additional factor depending explicitly on the ambient dimension of \(V\) appears.  This yields analytic inequalities for products of exponentials in unitary representations of compact groups, phrased in root-theoretic language and directly applicable to Hamiltonian simulation.  These inequalities complement the standard BCH-type bounds \cite{hall2015lie,iserles2000lie,blanes2009magnus,hochbruckostermann2010,blanescasas2016,magnus1954}, which are usually stated only in terms of operator norms, and they are compatible with the Magnus and exponential-integrator frameworks in numerical analysis \cite{hairer2006geometric,sanzsernacalvo1994,lubich2008}.

\item[(3)] \emph{Root-gate circuits and a proved simulation-complexity lower bound.}
We describe a root-gate circuit model in which the allowed generators are elements of \(\mathfrak{t}\) and of the real planes associated with individual root spaces.  In this setting, the structure of weight spaces suggests that the \(\ell^1\)-root activity \(\mathcal{A}_1(X_{\mathrm{root}})\) controls lower bounds on circuit length for simulating \(U_X(t)\).  We state and prove a root-activity lower bound (Theorem~\ref{thm:activity-lower-bound}) that quantifies this intuition with explicit constants depending only on the representation, the Lie algebra, and the allowed gate step size.  The proof uses norm equivalence on \(\mathfrak{g}\), geometric control on compact groups, and stability properties of the matrix logarithm in unitary representations, and it is formulated in a way that can be adapted to other Lie-theoretic gate models.

\item[(4)] \emph{Concrete many-body Hamiltonians within a representation-theoretic framework.}
In the \(\mathrm{SU}(2^n)\) examples we show that, for standard spin-chain Hamiltonians, the root activity is supported on a small number of roots and admits a simple combinatorial description in terms of local fields and couplings.  This leads to sharper Hamiltonian-simulation complexity estimates than those obtainable from global operator norms alone \cite{childs2021theory,berry2015simulating,low2017optimal,low2019hamiltonian,georgescu2014quantum,sachdev2011quantum}, while the analysis is carried out entirely in the language of weights and roots of \(\mathfrak{su}(2^n)\).  In particular, we treat standard Pauli-spin Hamiltonians within the same highest-weight formalism that underlies the representation theory of \(\mathrm{SU}(N)\) \cite{gilmore2008lie,barutraczka1986,goodmanwallach2009,kirillov1976}.

\item[(5)] \emph{Bridge between representation theory and quantum algorithms.}
There is a body of work in quantum information theory that uses representation theory (for example, via Schur--Weyl duality and highest weights) to design and analyze algorithms; see \cite{nielsen2000,fultonharris1991,fuchsschweigert1997,weyl1950,barutraczka1986}.  Our results show that root decompositions and weight-space geometry also provide natural complexity measures for Hamiltonian simulation, suggesting that representation theory can inform not only the structure of Hilbert spaces but also quantitative bounds on algorithmic resources for time evolution.  In particular, the functionals \(\mathcal{A}_p\) and \(\mathcal{C}\) are stable under intertwiners and under Weyl-group symmetries, and thus they descend to invariants on isomorphism classes of unitary representations and on adjoint orbits of Hamiltonian generators.

\item[(6)] \emph{Potential connections with orbit methods and geometric complexity theory.}
The dependence of our Hamiltonian-simulation bounds on root data and adjoint orbits suggests links with the orbit method and with symplectic geometry on coadjoint orbits \cite{kirillov1976,kirillov2004,helgason1978differential,varadarajan1989,guilleminsternberg1984,faraut2008}.  At the same time, the appearance of representation-theoretic complexity measures is reminiscent of ideas in geometric complexity theory \cite{mulmuleysohoni2001,procesi2007}, where orbit-closure relations and moment polytopes encode computational hardness.  Section~\ref{sec:further} develops these perspectives further, always keeping Hamiltonian simulation as the guiding application.
\end{itemize}

\subsection{Outline of the paper}

Section~\ref{sec:prelim} reviews basic facts about compact semisimple Lie
groups, root-space decompositions, and unitary highest-weight representations, emphasizing their role in describing Hamiltonian generators and their evolutions.  In Section~\ref{sec:root-profiles} we introduce the root activity and root curvature functionals and study their invariance and basic analytic properties.  Section~\ref{sec:splittings} develops torus--root product formulas for Hamiltonian evolution and proves error bounds in terms of \(\mathcal{C}(X)\) and \(\mathcal{A}_1(X_{\mathrm{root}})\) using BCH-type estimates and exponential-integrator techniques.  Section~\ref{sec:root-gate-complexity} formulates a root-gate circuit model and proves a lower bound for the circuit length of Hamiltonian simulation in terms of \(\mathcal{A}_1(X_{\mathrm{root}})\).  In Section~\ref{sec:examples} we discuss examples from multi-spin Hamiltonians on \((\mathbb{C}^2)^{\otimes n}\).  Section~\ref{sec:further} closes with further directions and structural perspectives that locate the results within the broader representation-theoretic and Hamiltonian-simulation landscape.

\section{Preliminaries}
\label{sec:prelim}

In this section we fix notation and recall standard material on compact semisimple Lie algebras, maximal tori, root systems, weight decompositions, and basic Hamiltonian evolution in unitary representations.  We also record some elementary inequalities that will be used later.

\subsection{Compact semisimple Lie algebras and the Killing form}

Let \(\mathfrak{g}\) be a compact semisimple real Lie algebra.  Its complexification
\(\mathfrak{g}_\mathbb{C} = \mathfrak{g} \otimes_\mathbb{R} \mathbb{C}\) is
a complex semisimple Lie algebra.  The (complex) Killing form
\[
B_\mathbb{C} : \mathfrak{g}_\mathbb{C} \times \mathfrak{g}_\mathbb{C}
\longrightarrow \mathbb{C},
\qquad
B_\mathbb{C}(X,Y) = \mathrm{Tr}(\operatorname{ad}_X \circ \operatorname{ad}_Y),
\]
is nondegenerate.  Restricting \(B_\mathbb{C}\) to \(\mathfrak{g}\) yields the real Killing form \(B\), which is negative definite since \(\mathfrak{g}\) is compact \cite[Ch.~I]{helgason1978differential,bourbaki2005,duistermaatkolk2000}.  We will use the normalized inner product
\[
\langle X,Y\rangle_{\mathfrak{g}} \coloneqq -B(X,Y),
\qquad X,Y\in\mathfrak{g},
\]
which is positive definite and \(Ad(G)\)-invariant.

The inner product \(\langle\cdot,\cdot\rangle_{\mathfrak{g}}\) induces a norm
\(\|X\|_{\mathfrak{g}} = \sqrt{\langle X,X\rangle_{\mathfrak{g}}}\).  Since the representation \(d\rho : \mathfrak{g}\to\mathfrak{u}(V)\) is linear and \(\mathfrak{g}\) is finite-dimensional, the norms \(\|\cdot\|_{\mathfrak{g}}\) and \(\|d\rho(\cdot)\|_{\mathrm{op}}\) are equivalent: there exists \(\Lambda>0\) such that
\begin{equation}\label{eq:norm-equivalence}
\|d\rho(Y)\|_{\mathrm{op}} \le \Lambda \|Y\|_{\mathfrak{g}}
\quad\text{for all } Y\in\mathfrak{g}.
\end{equation}

\begin{example}[\(\mathfrak{su}(n)\) as a compact semisimple algebra]\label{ex:su2-su3}
The Lie algebra \(\mathfrak{su}(n)\) of traceless skew-Hermitian \(n\times n\)-matrices is compact and semisimple for all \(n\ge 2\) \cite[\S IV.1]{knapp2002lie,bourbaki2005}.  On \(\mathfrak{su}(n)\) one has
\[
B(X,Y)=2n\,\mathrm{Tr}(XY), \qquad X,Y\in\mathfrak{su}(n),
\]
so that \(-B\) is a positive multiple of the Hilbert--Schmidt inner product \(\mathrm{Tr}(X^\ast Y)\) \cite[\S7.2]{hall2015lie}.  In Hamiltonian language, traceless Hermitian matrices \(H\) with \(iH\in\mathfrak{su}(n)\) provide natural many-level Hamiltonians whose dynamics we wish to simulate.
\end{example}

\subsection{Maximal tori, roots, and root-space decompositions}

Let \(G\) be the connected, simply connected compact group with Lie algebra
\(\mathfrak{g}\).  Let \(T\subset G\) be a maximal torus with Lie algebra
\(\mathfrak{t}\subset\mathfrak{g}\).  The complexification \(\mathfrak{t}_\mathbb{C}=\mathfrak{t}\otimes_\mathbb{R}\mathbb{C}\) is a Cartan subalgebra of \(\mathfrak{g}_\mathbb{C}\).

\begin{definition}[Roots]\label{def:root}
A nonzero linear functional \(\alpha \in \mathfrak{t}_\mathbb{C}^\ast\) is a
\emph{root} of \(\mathfrak{g}_\mathbb{C}\) (with respect to \(\mathfrak{t}_\mathbb{C}\)) if the root space
\[
\mathfrak{g}_\alpha
\coloneqq
\{ X\in\mathfrak{g}_\mathbb{C} : [H,X] = \alpha(H) X \text{ for all }H\in\mathfrak{t}_\mathbb{C}\}
\]
is nonzero.  The set of roots is denoted \(\Delta \subset \mathfrak{t}_\mathbb{C}^\ast\).
\end{definition}

Each root space \(\mathfrak{g}_\alpha\) is one-dimensional \cite[Thm.~VI.1.3]{knapp2002lie,bourbaki2005,serre2001}.  Choosing for each \(\alpha\in\Delta\) a nonzero root vector \(E_\alpha\in\mathfrak{g}_\alpha\), we obtain the root-space decomposition
\begin{equation}\label{eq:root-space-decomp-prelim}
\mathfrak{g}_\mathbb{C}
=
\mathfrak{t}_\mathbb{C} \oplus
\bigoplus_{\alpha\in\Delta} \mathfrak{g}_\alpha,
\qquad
\mathfrak{g}_\alpha = \mathbb{C} E_\alpha.
\end{equation}
We take the root vectors to be orthogonal with respect to the complex-bilinear extension of \(-B\); see \cite[Ch.~VI]{knapp2002lie,bourbaki2005}.

We also fix a choice of positive roots \(\Delta^+\subset\Delta\) and corresponding simple roots \(\Pi\subset\Delta^+\), so that every root is an integer linear combination of elements of \(\Pi\) with all coefficients either nonnegative or nonpositive \cite{humphreys1972,serre2001,bourbaki2005}.

\begin{example}[Roots of \(\mathfrak{su}(3)\)]\label{ex:su3-roots}
Let \(\mathfrak{g}=\mathfrak{su}(3)\) and choose the maximal torus \(T\) of diagonal unitary matrices with determinant \(1\).  Then \(\mathfrak{t}\) consists of traceless diagonal skew-Hermitian matrices
\[
H = i\operatorname{diag}(\theta_1,\theta_2,\theta_3),
\qquad \theta_1+\theta_2+\theta_3=0.
\]
The complexified Cartan subalgebra \(\mathfrak{t}_\mathbb{C}\) can be identified with traceless diagonal matrices with complex entries.  The complexified algebra \(\mathfrak{sl}(3,\mathbb{C})\) has root spaces spanned by the matrix units \(E_{ij}\) for \(i\neq j\).  The corresponding roots \(\alpha_{ij} \in \mathfrak{t}_\mathbb{C}^\ast\) are given by
\[
\alpha_{ij}(H) = i(\theta_i - \theta_j),
\qquad 1\le i\neq j\le 3.
\]
A standard choice of simple roots is
\[
\alpha_1 = \alpha_{12},\qquad \alpha_2 = \alpha_{23},
\]
and the positive roots are \(\Delta^+ = \{\alpha_1,\alpha_2,\alpha_1+\alpha_2\}\).  This is the usual \(A_2\) root system \cite[\S13.2]{humphreys1972,fultonharris1991}.  Hamiltonian generators in \(\mathfrak{su}(3)\) can be decomposed into toral and root parts relative to this root system, and their root activity and curvature will be expressed in terms of the coefficients of the \(E_{ij}\).
\end{example}

\subsection{Weights, highest weights, and Hamiltonian action}

Let \(\rho : G \to U(V)\) be a finite-dimensional unitary representation, with differential \(d\rho\).  Since \(\mathfrak{t}\) is abelian and acts by skew-Hermitian operators, we may simultaneously diagonalize \(d\rho(\mathfrak{t})\), obtaining a weight decomposition adapted to the Hamiltonian generators that lie in \(\mathfrak{t}\).

\begin{definition}[Weights and weight spaces]\label{def:weights}
A linear functional \(\lambda \in \mathfrak{t}_\mathbb{C}^\ast\) is called a
\emph{weight} of \(\rho\) if the weight space
\[
V_\lambda
=
\{ v\in V : d\rho(H) v = \lambda(H)\,v \text{ for all }H\in\mathfrak{t}\}
\]
is nonzero.  The set of weights is denoted \(\Lambda(\rho)\subset \mathfrak{t}_\mathbb{C}^\ast\).
\end{definition}

We then have the weight decomposition
\begin{equation}\label{eq:weight-decomp}
V = \bigoplus_{\lambda\in\Lambda(\rho)} V_\lambda.
\end{equation}
For each root \(\alpha\) and weight \(\lambda\), the operator \(d\rho(E_\alpha)\)
maps \(V_\lambda\) into \(V_{\lambda+\alpha}\).  This is the familiar raising and lowering action along root strings \cite[Ch.~VIII]{knapp2002lie,goodmanwallach2009,fultonharris1991,fuchsschweigert1997}.  From the Hamiltonian-simulation point of view, the operators \(d\rho(E_\alpha)\) are precisely the pieces of the Hamiltonian that move amplitude between weight spaces.

If \(\rho\) is irreducible, there exists a highest weight \(\lambda_{\mathrm{max}}\) such that all other weights differ from \(\lambda_{\mathrm{max}}\) by nonnegative integer combinations of negative roots.  We refer to \cite{knapp2002lie,knapp1986rt,humphreys1972,hall2015lie,gilmore2008lie,goodmanwallach2009,serre2001,barutraczka1986,vogan1987} for further background.

\begin{example}[Weights of the standard representation of \(\mathrm{SU}(3)\)]\label{ex:su3-weights}
Let \(G=\mathrm{SU}(3)\) and let \(\rho\) be the defining representation on \(V=\mathbb{C}^3\).  With \(\mathfrak{t}\) as in Example~\ref{ex:su3-roots}, the weight spaces are one-dimensional, spanned by the standard basis vectors \(e_1,e_2,e_3\).  The corresponding weights \(\lambda_1,\lambda_2,\lambda_3\) are given by
\[
d\rho(H)e_k = i\theta_k e_k,
\qquad k=1,2,3,
\]
so that \(\lambda_k(H) = i\theta_k\), with \(\theta_1+\theta_2+\theta_3=0\).  They satisfy
\[
\lambda_i - \lambda_j = \alpha_{ij},
\]
i.e., the roots arise as differences of weights.  This picture is typical for minuscule representations and will be mirrored in our analysis of the defining representation of \(\mathrm{SU}(2^n)\) in Section~\ref{sec:examples}; see \cite[\S13.4]{humphreys1972} and \cite[\S15.3]{fultonharris1991}.  Hamiltonian generators in \(\mathfrak{su}(3)\) act diagonally on these weight spaces through their toral part and off-diagonally through their root part.
\end{example}

\subsection{Hamiltonian evolution in unitary representations}
\label{subsec:intro-reps}

We now briefly record the basic definitions of Hamiltonian evolution and gate sets in the representation-theoretic language.

Let \(G\), \(\mathfrak{g}\), \(V\), and \(\rho\) be as above.  For an element \(X\in\mathfrak{g}\), the unitary evolution is given by \eqref{eq:Ux-def}.  We regard \(H_X = -i\,d\rho(X)\) as the Hamiltonian and \(U_X(t)\) as the corresponding time evolution.

\begin{definition}[Gate set associated with a representation]\label{def:gate-set}
Let \(S\subset\mathfrak{g}\) be a finite subset of the Lie algebra \(\mathfrak{g}\).  For each \(Y\in S\) and each real parameter \(s\in\mathbb{R}\), consider the unitary
\[
G(Y,s) \coloneqq e^{s\,d\rho(Y)} \in U(V).
\]
We define the associated gate set to be
\[
\mathcal{G}(S,\rho) \coloneqq \{ G(Y,s) : Y\in S,\ s\in\mathbb{R}\}.
\]
A \emph{gate sequence} of length \(N\) is a product
\[
W = U_1 U_2 \cdots U_N,
\qquad U_j \in \mathcal{G}(S,\rho) \text{ for all } j.
\]
\end{definition}

\begin{example}[Single-spin Hamiltonian evolution in \(\mathrm{SU}(2)\)]\label{ex:su2-intro}
Let \(G=\mathrm{SU}(2)\) be the special unitary group of degree two, and let \(\mathfrak{g}=\mathfrak{su}(2)\) be its Lie algebra, consisting of \(2\times 2\) traceless skew-Hermitian matrices.  Let \(V=\mathbb{C}^2\) with its standard Hermitian inner product, and let \(\rho\) be the defining representation of \(\mathrm{SU}(2)\) on \(V\), so that \(\rho(g)=g\) for all \(g\in\mathrm{SU}(2)\).  In the standard basis of Pauli matrices
\[
\sigma_x =
\begin{pmatrix}
0 & 1\\
1 & 0
\end{pmatrix},
\quad
\sigma_y =
\begin{pmatrix}
0 & -i\\
i & 0
\end{pmatrix},
\quad
\sigma_z =
\begin{pmatrix}
1 & 0\\
0 & -1
\end{pmatrix},
\]
the Lie algebra \(\mathfrak{su}(2)\) is spanned (over \(\mathbb{R}\)) by \(\{i\sigma_x,i\sigma_y,i\sigma_z\}\) \cite{hall2015lie,fultonharris1991}.  Fix a real parameter \(\omega\in\mathbb{R}\) and set
\[
X \coloneqq -i\,\omega\sigma_z \in \mathfrak{su}(2).
\]
Since \(\rho\) is the defining representation, we have \(d\rho(X)=X\).  The corresponding unitary evolution is
\[
U_X(t)
=
e^{tX}
=
\exp(-i t \omega \sigma_z)
=
\begin{pmatrix}
e^{-i t\omega} & 0\\
0 & e^{i t\omega}
\end{pmatrix},
\]
and the Hamiltonian is \(H_X = -i\,d\rho(X) = \omega\sigma_z\).  This describes the phase evolution of a single spin-\(\tfrac12\) in a static magnetic field oriented along the \(z\)-axis \cite{nielsen2000}.  In this simple example, the generator \(X\) lies in a Cartan subalgebra and has vanishing root activity and curvature, so the symmetric torus--root splitting considered later is exact and our error bound is sharp in the trivial sense.
\end{example}

\begin{example}[Many-spin Hamiltonians and tensor products]\label{ex:intro-tensor}
Let \(n\geq 1\) be an integer and consider the \(n\)-qubit Hilbert space
\[
V \coloneqq (\mathbb{C}^2)^{\otimes n}.
\]
Let \(G=\mathrm{SU}(2^n)\) and let \(\rho\) be the defining representation of \(G\) on \(V\); thus \(\rho(g)=g\) for all \(g\in\mathrm{SU}(2^n)\).  The associated Lie algebra is \(\mathfrak{g}=\mathfrak{su}(2^n)\), the space of \(2^n\times 2^n\) traceless skew-Hermitian matrices.

Given an element \(X\in\mathfrak{su}(2)\), we can embed \(X\) at a specific site \(k \in \{1,\dots,n\}\) by defining
\[
X^{(k)} \coloneqq I^{\otimes (k-1)}\otimes X\otimes I^{\otimes (n-k)} \in \mathfrak{su}(2^n),
\]
where \(I\) denotes the \(2\times2\) identity.  Then the Hamiltonian associated with \(X^{(k)}\) is
\[
H_{X^{(k)}} \coloneqq -i\,X^{(k)},
\]
and the corresponding time evolution is
\[
U_{X^{(k)}}(t)
=
e^{tX^{(k)}}
=
I^{\otimes (k-1)}\otimes e^{tX}\otimes I^{\otimes (n-k)}.
\]
Hamiltonians obtained by summing such local terms and their interactions (for example, tensor products of Pauli matrices acting on neighbouring sites) model quantum spin chains and other many-body systems \cite{sachdev2011quantum,georgescu2014quantum}.  These spin-chain Hamiltonians are precisely the class of examples highlighted in the abstract and in Section~\ref{sec:examples}, where we compute \(\mathcal{A}_p(X)\) and \(\mathcal{C}(X)\) explicitly and obtain sharper Hamiltonian-simulation bounds than those coming from global operator norms alone.
\end{example}

\subsection{Decompositions and product formulas}
\label{subsec:intro-decomp}

A widely used strategy for Hamiltonian simulation is to assume that the generator \(X\in\mathfrak{g}\) decomposes as a finite sum
\begin{equation}\label{eq:classical-decomp-again}
X = \sum_{j=1}^{m} X_j,
\end{equation}
where each \(X_j\in\mathfrak{g}\) is chosen so that the exponentials \(e^{s\,d\rho(X_j)}\) belong to a chosen gate set \(\mathcal{G}(S,\rho)\) (as in Definition~\ref{def:gate-set}) and are easy to implement.  One then approximates the exact evolution
\[
e^{t\,d\rho(X)} = e^{t\sum_{j=1}^m d\rho(X_j)}
\]
by products of the simpler exponentials \(e^{t_j d\rho(X_j)}\) with suitable coefficients \(t_j \in \mathbb{R}\).  The Lie--Trotter product formula, Strang splitting, and higher-order Suzuki formulas provide systematic schemes of this type; see, for example, \cite{trotter1959product,suzuki1990fractal,blanes2009magnus,iserles2000lie,hairer2006geometric,sanzsernacalvo1994,hochbruckostermann2010,lubich2008,blanescasas2016}.

\begin{example}[Two-term Trotter splitting in \(\mathfrak{su}(2)\)]\label{ex:su2-trotter-clean}
Let \(G=\mathrm{SU}(2)\), let \(\mathfrak{g}=\mathfrak{su}(2)\), and let \(\rho\) be the defining representation on \(V=\mathbb{C}^2\).  Fix real parameters \(\omega_x,\omega_z\in\mathbb{R}\) and set
\[
X \coloneqq -i(\omega_x \sigma_x + \omega_z \sigma_z) \in \mathfrak{su}(2).
\]
A natural decomposition of the form \eqref{eq:classical-decomp-again} is
\[
X_1 \coloneqq -i\,\omega_x \sigma_x, \qquad
X_2 \coloneqq -i\,\omega_z \sigma_z,
\]
so that \(X=X_1+X_2\).  The first-order Trotter formula approximates
\[
e^{t(X_1+X_2)} \approx e^{tX_1} e^{tX_2}.
\]
A standard BCH estimate (see, for example, \cite{nielsen2000,hairer2006geometric,iserles2000lie}) shows that the local error for small \(t\) satisfies
\[
\bigl\| e^{t(X_1+X_2)} - e^{tX_1} e^{tX_2} \bigr\|_{\mathrm{op}}
= O\bigl(t^2 \,\|[X_1,X_2]\|_{\mathrm{op}}\bigr),
\]
where \([X_1,X_2]=X_1X_2-X_2X_1\) is the Lie bracket in \(EndV)\).  A direct computation using the commutation relations of the Pauli matrices yields
\[
[X_1,X_2]
=
-\omega_x\omega_z[\sigma_x,\sigma_z]
=
-2i\omega_x\omega_z\sigma_y,
\]
and hence \(\|[X_1,X_2]\|_{\mathrm{op}}\) is comparable to \(|\omega_x\omega_z|\).  Later we will reinterpret this error bound in terms of the root activity and root curvature associated with the decomposition of \(X\) into toral and root components, as in the abstract.
\end{example}

The literature contains many refinements of such bounds, incorporating locality, sparsity, and higher-order commutator structure; see \cite{lloyd1996universal,berry2007efficient,berry2015simulating,low2017optimal,low2019hamiltonian,babbush2018encoding,georgescu2014quantum,cao2019quantumchem,aspuruguzik2005,whitfield2011,kassal2011,childs2021theory,haah2018product,campbell2019random} and references therein.  From the viewpoint of representation theory, however, these bounds are often \emph{coarse}: they typically depend on norms of the operators \(d\rho(X_j)\) and their commutators, but do not exploit the finer decomposition of \(\mathfrak{g}\) into root spaces or the induced weight-space decomposition of \(V\).  The rest of the paper, beginning with the root-space framework in Section~\ref{sec:root-profiles}, is devoted to incorporating this finer structure into quantitative Hamiltonian-simulation bounds.

\section{Root Profiles and Representation-Theoretic Functionals}
\label{sec:root-profiles}

In this section we introduce quantitative measures of how a Lie algebra element \(X\in\mathfrak{g}\), viewed as a Hamiltonian generator, decomposes along root spaces, and how this decomposition interacts with a fixed unitary representation \(\rho\) relevant for Hamiltonian simulation.

\subsection{Torus--root decomposition of a Hamiltonian generator}

Let \(X\in\mathfrak{g}\).  Its complexification \(X_\mathbb{C} = X \otimes 1\) belongs to \(\mathfrak{g}_\mathbb{C}\).  Using the decomposition
\eqref{eq:root-space-decomp-prelim} we may write
\begin{equation}\label{eq:root-decomp}
X_\mathbb{C}
=
X_0 + \sum_{\alpha\in\Delta} x_\alpha E_\alpha,
\end{equation}
where \(X_0\in\mathfrak{t}_\mathbb{C}\) and \(x_\alpha\in\mathbb{C}\).  As discussed in the introduction, there is a unique \(X_0^{\mathrm{real}}\in\mathfrak{t}\) whose complexification is \(X_0\); for simplicity we continue to denote this real element by \(X_0\).

\begin{definition}[Toral and root components]\label{def:toral-root}
The \emph{toral component} of the Hamiltonian generator \(X\) is the element \(X_0 \in \mathfrak{t}\) obtained as the \(\mathfrak{t}\)-component of \(X\) with respect to the orthogonal decomposition
\[
\mathfrak{g} = \mathfrak{t} \oplus \mathfrak{t}^\perp,
\]
where orthogonality is with respect to \(\langle\cdot,\cdot\rangle_{\mathfrak{g}}\).  The remaining piece
\[
X_{\mathrm{root}} \coloneqq
X - X_0
\]
is called the \emph{root component} of the Hamiltonian generator \(X\).
\end{definition}

By construction, \(X_{\mathrm{root}}\) is characterized by the property that \(\langle X_{\mathrm{root}},H\rangle_{\mathfrak{g}}=0\) for all \(H\in\mathfrak{t}\), i.e., it lies in the orthogonal complement of \(\mathfrak{t}\).  The coefficients \(x_\alpha\) in \eqref{eq:root-decomp} encode how strongly the Hamiltonian generator \(X\) couples weight spaces separated by the root \(\alpha\).

\begin{lemma}[Uniqueness of the toral--root decomposition]\label{lem:unique-decomp}
For each \(X\in\mathfrak{g}\) there exist unique elements \(X_0\in\mathfrak{t}\) and \(X_{\mathrm{root}}\in\mathfrak{t}^\perp\) such that \(X=X_0+X_{\mathrm{root}}\).  Moreover, the coefficients \(x_\alpha\) in \eqref{eq:root-decomp} are uniquely determined by this decomposition.
\end{lemma}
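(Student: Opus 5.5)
Since \(\langle\cdot,\cdot\rangle_{\mathfrak{g}}=-B\) is a positive definite inner product on the finite-dimensional real vector space \(\mathfrak{g}\), the first claim is the standard orthogonal projection theorem. The plan is to show that \(\mathfrak{g}=\mathfrak{t}\oplus\mathfrak{t}^\perp\): if \(X=X_0+X_{\mathrm{root}}=X_0'+X_{\mathrm{root}}'\), then \(X_0-X_0'\in\mathfrak{t}\cap\mathfrak{t}^\perp\). Positive definiteness forces this element to be zero, which gives uniqueness. Existence follows by taking an orthonormal basis \(\{H_j\}\) of \(\mathfrak{t}\) and setting
\[
X_0=\sum_j\langle X,H_j\rangle_{\mathfrak{g}}H_j,
\qquad
X_{\mathrm{root}}=X-X_0 .
\]

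The second step reconciles this with the complex decomposition \eqref{eq:root-decomp}. I would show that the complexification of \(\mathfrak{t}^\perp\) equals \(\bigoplus_{\alpha\in\Delta}\mathfrak{g}_\alpha\).

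\emph{Orthogonality.} For \(H\in\mathfrak{t}_\mathbb{C}\) and \(E_\alpha\in\mathfrak{g}_\alpha\), invariance of \(B_\mathbb{C}\) gives
\[
\alpha(H')B_\mathbb{C}(H,E_\alpha)=B_\mathbb{C}(H,[H',E_\alpha])=-B_\mathbb{C}([H',H],E_\alpha)=0 .
\]
Choosing \(H'\) with \(\alpha(H')\neq0\) yields \(B_\mathbb{C}(\mathfrak{t}_\mathbb{C},\mathfrak{g}_\alpha)=0\).

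\emph{Dimension count.} Both sides have dimension \(\dim\mathfrak{g}-\dim\mathfrak{t}\), by \eqref{eq:root-space-decomp-prelim}. Hence \(\bigoplus_\alpha\mathfrak{g}_\alpha=(\mathfrak{t}^\perp)_\mathbb{C}\), where the orthogonal complement is taken for the complex-bilinear extension of \(-B\). This complement coincides with \((\mathfrak{t}^\perp)\otimes\mathbb{C}\) because \(\mathfrak{t}_\mathbb{C}\) is spanned by the real subspace \(\mathfrak{t}\).

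Complexifying \(X=X_0+X_{\mathrm{root}}\) then gives a decomposition of \(X_\mathbb{C}\) with first summand in \(\mathfrak{t}_\mathbb{C}\) and second in \(\bigoplus_\alpha\mathfrak{g}_\alpha\). The decomposition \eqref{eq:root-space-decomp-prelim} is a direct sum, so its components are unique. Therefore the toral component of \(X_\mathbb{C}\) is the complexification of the real \(X_0\), and \(X_{\mathrm{root},\mathbb{C}}=\sum_\alpha x_\alpha E_\alpha\). This also justifies the earlier identification of \(X_0\) with a real element.

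Finally, each \(\mathfrak{g}_\alpha=\mathbb{C}E_\alpha\) is one-dimensional and the \(E_\alpha\) are fixed. So the expansion of \(X_{\mathrm{root},\mathbb{C}}\) in the basis \(\{E_\alpha\}\) of \(\bigoplus_\alpha\mathfrak{g}_\alpha\) has unique coefficients \(x_\alpha\). For an explicit formula, one can use the nondegenerate pairing \(B_\mathbb{C}(E_\alpha,E_{-\alpha})\neq0\), which gives
\[
x_\alpha=\frac{B_\mathbb{C}(X_\mathbb{C},E_{-\alpha})}{B_\mathbb{C}(E_\alpha,E_{-\alpha})}.
\]
The only step needing care is the identification \((\mathfrak{t}^\perp)_\mathbb{C}=\bigoplus_\alpha\mathfrak{g}_\alpha\), specifically the invariance computation above. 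The remaining steps are linear algebra.
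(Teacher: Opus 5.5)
Your proof is correct and follows the same route as the paper's: orthogonal projection onto \(\mathfrak{t}\) with respect to the positive definite form \(-B\) gives existence and uniqueness of \((X_0,X_{\mathrm{root}})\), and uniqueness of the \(x_\alpha\) comes from the directness of the root-space decomposition \eqref{eq:root-space-decomp-prelim}. What you add is the check, via ad-invariance of \(B_\mathbb{C}\) and a dimension count, that \(\mathfrak{t}^\perp\otimes\mathbb{C}=\bigoplus_{\alpha\in\Delta}\mathfrak{g}_\alpha\); the paper's proof leaves this implicit, but it is exactly what ensures that the \(\mathfrak{t}_\mathbb{C}\)-component of \(X_\mathbb{C}\) in \eqref{eq:root-decomp} is the complexification of the real projection \(X_0\).
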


\begin{proof}
The decomposition \(\mathfrak{g} = \mathfrak{t}\oplus\mathfrak{t}^\perp\) holds because \(\mathfrak{t}\) is a subspace and \(\langle\cdot,\cdot\rangle_{\mathfrak{g}}\) is positive definite.  Orthogonal projections give the uniqueness of \(X_0\) and \(X_{\mathrm{root}}\).  Uniqueness of the coefficients \(x_\alpha\) then follows by comparing the \(\mathfrak{g}_\alpha\) components of \(X_\mathbb{C}\) with respect to the direct-sum decomposition \eqref{eq:root-space-decomp-prelim}.
\end{proof}

\begin{example}[Root decomposition in \(\mathfrak{su}(2)\)]\label{ex:su2-root-decomp}
Let \(\mathfrak{g}=\mathfrak{su}(2)\) with complexification \(\mathfrak{sl}(2,\mathbb{C})\).  Choose the standard \(\mathfrak{sl}(2)\)-triple
\[
H=
\begin{pmatrix}
1 & 0\\
0 & -1
\end{pmatrix},
\quad
E_\alpha=
\begin{pmatrix}
0 & 1\\
0 & 0
\end{pmatrix},
\quad
E_{-\alpha}=
\begin{pmatrix}
0 & 0\\
1 & 0
\end{pmatrix},
\]
so that \([H,E_\alpha]=2E_\alpha\), \([H,E_{-\alpha}]=-2E_{-\alpha}\) \cite[\S II.7]{serre2001}.  The compact real form \(\mathfrak{su}(2)\) is spanned (up to normalization) by
\[
iH,\quad i(E_\alpha+E_{-\alpha}),\quad (E_\alpha - E_{-\alpha}),
\]
and we take \(\mathfrak{t} = \mathbb{R}\cdot iH\).  Every Hamiltonian generator \(X\in\mathfrak{su}(2)\) can be written as
\[
X = a\,iH + b\,(E_\alpha - E_{-\alpha}) + c\,i(E_\alpha+E_{-\alpha}),
\qquad a,b,c\in\mathbb{R}.
\]
The toral component is \(X_0 = a\,iH\), and the root component is
\[
X_{\mathrm{root}} = b\,(E_\alpha - E_{-\alpha}) + c\,i(E_\alpha+E_{-\alpha})
=
x_\alpha E_\alpha + x_{-\alpha}E_{-\alpha}
\]
with \(x_\alpha = b+ci\), \(x_{-\alpha}=-(b-ci)\).  In particular, if the Hamiltonian generator \(X\) is conjugate in \(\mathfrak{su}(2)\) to an element of \(\mathfrak{t}\), then in the corresponding diagonalizing Cartan subalgebra the root component vanishes and all of the Hamiltonian is toral, leading to trivial root activity and curvature in that basis.
\end{example}

\subsection{Root activity functionals}

The action of the Hamiltonian generator \(X\) in the representation \(\rho\) is mediated by the operators
\[
d\rho(X_0),\qquad d\rho(E_\alpha),\quad \alpha\in\Delta.
\]
Typical operator-norm estimates used in Hamiltonian simulation are insensitive to the finer decomposition in
\eqref{eq:root-decomp}.  We refine these estimates by weighting root-space coefficients with the corresponding operator norms.

\begin{definition}[Root activity]\label{def:root-activity}
Let \(1 \le p \le \infty\).  For a Hamiltonian generator \(X\in\mathfrak{g}\) with root-space decomposition \eqref{eq:root-decomp}, the \emph{root activity} of order \(p\) is
\[
\mathcal{A}_p(X)
\coloneqq
\Bigl(\sum_{\alpha\in\Delta}
|x_\alpha|^p\,\|d\rho(E_\alpha)\|_{\mathrm{op}}^p\Bigr)^{1/p}
\quad\text{for } p<\infty,
\]
and
\[
\mathcal{A}_\infty(X)
\coloneqq
\sup_{\alpha\in\Delta} |x_\alpha|\,\|d\rho(E_\alpha)\|_{\mathrm{op}}.
\]
\end{definition}

We will primarily use \(\mathcal{A}_1\) and \(\mathcal{A}_2\), which measure the total and quadratic size of the off-diagonal (root) part of the Hamiltonian in the representation.  These quantities quantify how much of the Hamiltonian is ``visible'' along individual root directions in the representation \(\rho\), and hence how much amplitude can be transported along each root direction.

\begin{remark}[Dependence on normalization]\label{rem:normalization}
The choice of normalization for the root vectors \(E_\alpha\) affects the
coefficients \(x_\alpha\) and the norms \(\|d\rho(E_\alpha)\|\), but the products
\(|x_\alpha|\,\|d\rho(E_\alpha)\|\) are invariant under rescaling of \(E_\alpha\).
Thus \(\mathcal{A}_p(X)\) depends only on \((\rho,X_0,X_{\mathrm{root}})\) and not
on the particular choice of root basis; see, for example, \cite[Ch.~VI]{knapp2002lie,bourbaki2005,goodmanwallach2009}.  This ensures that the root activity is an intrinsic quantity for the Hamiltonian generator.
\end{remark}

\begin{lemma}[Basic inequalities for root activity]\label{lem:Ap-basic}
For any Hamiltonian generator \(X\in\mathfrak{g}\) and \(1\le p\le q\le\infty\), the activities satisfy
\[
\mathcal{A}_q(X) \le \mathcal{A}_p(X) \le |\Delta|^{\frac1p-\frac1q}\,\mathcal{A}_q(X),
\]
where the second inequality is interpreted in the usual way when \(q=\infty\).  Moreover
\[
\|d\rho(X_{\mathrm{root}})\|_{\mathrm{op}} \le \mathcal{A}_1(X_{\mathrm{root}}).
\]
\end{lemma}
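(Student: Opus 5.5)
The plan is to treat $\mathcal{A}_p(X)$ as the $\ell^p$-norm of the finite vector $a=(a_\alpha)_{\alpha\in\Delta}$, $a_\alpha\coloneqq |x_\alpha|\,\|d\rho(E_\alpha)\|_{\mathrm{op}}\ge 0$, indexed by the finite set $\Delta$. Then both displayed chains reduce to standard comparisons of $\ell^p$-norms on $\mathbb{R}^{|\Delta|}$, and the operator-norm bound is the triangle inequality applied to the root expansion \eqref{eq:root-decomp}.

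\emph{Step 1 (monotonicity, $\mathcal{A}_q\le\mathcal{A}_p$).} If $a=0$ there is nothing to prove. Otherwise normalize $b=a/\|a\|_p$, so $\sum b_\alpha^p=1$ and hence each $b_\alpha\le 1$. For $q\ge p$ this gives $b_\alpha^q\le b_\alpha^p$, so $\sum b_\alpha^q\le 1$, i.e.\ $\|a\|_q\le\|a\|_p$. The case $q=\infty$ is immediate, since $\max_\alpha a_\alpha\le(\sum_\alpha a_\alpha^p)^{1/p}$.

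\emph{Step 2 (reverse inequality).} For $q<\infty$, apply H\"older's inequality to $\sum_\alpha a_\alpha^p\cdot 1$ with exponents $q/p$ and $(1-p/q)^{-1}$. This gives $\sum a_\alpha^p\le(\sum a_\alpha^q)^{p/q}|\Delta|^{1-p/q}$, and taking $p$-th roots yields the factor $|\Delta|^{1/p-1/q}$. For $q=\infty$, bound each $a_\alpha^p\le\|a\|_\infty^p$ and sum to get $\|a\|_p\le|\Delta|^{1/p}\|a\|_\infty$, which is the stated interpretation.

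\emph{Step 3 (operator-norm bound).} Extend $d\rho$ complex-linearly to $\mathfrak{g}_\mathbb{C}\to\mathrm{End}(V)$; the statement itself already uses $d\rho(E_\alpha)$ in this sense. The toral part of $X_{\mathrm{root}}$ is zero, and by Lemma~\ref{lem:unique-decomp} the root coefficients of $X_{\mathrm{root}}$ are the same $x_\alpha$ as those of $X$. Hence $d\rho(X_{\mathrm{root}})=\sum_\alpha x_\alpha\,d\rho(E_\alpha)$ on $V$, and the triangle inequality for $\|\cdot\|_{\mathrm{op}}$ gives
\[
\|d\rho(X_{\mathrm{root}})\|_{\mathrm{op}}\le\sum_\alpha|x_\alpha|\,\|d\rho(E_\alpha)\|_{\mathrm{op}}=\mathcal{A}_1(X_{\mathrm{root}}).
\]

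The only point needing care is in Step 3. We must check that the $\mathfrak{t}$-component of $X_{\mathrm{root},\mathbb{C}}$ in \eqref{eq:root-space-decomp-prelim} really vanishes, so that no toral term survives in the expansion of $d\rho(X_{\mathrm{root}})$. This follows because the root spaces are orthogonal to $\mathfrak{t}_\mathbb{C}$ under the bilinear extension of $-B$. As a result, the orthogonal splitting $\mathfrak{g}=\mathfrak{t}\oplus\mathfrak{t}^\perp$ of Definition~\ref{def:toral-root}, after complexification, coincides with the splitting $\mathfrak{t}_\mathbb{C}\oplus\bigoplus_\alpha\mathfrak{g}_\alpha$. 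Once this identification is in place, Steps 1 and 2 are routine finite-dimensional facts.
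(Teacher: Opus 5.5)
Your proposal is correct and takes the same route as the paper. The first chain is the standard comparison of \(\ell^p\)-norms applied to the vector \((|x_\alpha|\,\|d\rho(E_\alpha)\|_{\mathrm{op}})_{\alpha\in\Delta}\), and the operator-norm bound is the triangle inequality applied to \(d\rho(X_{\mathrm{root}})=\sum_\alpha x_\alpha\,d\rho(E_\alpha)\). You also supply details the paper leaves implicit: the normalization and H\"older proofs of the \(\ell^p\) comparisons, and the check that \((\mathfrak{t}^\perp)_\mathbb{C}=\bigoplus_\alpha\mathfrak{g}_\alpha\), which shows that no toral term survives in \(d\rho(X_{\mathrm{root}})\).
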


\begin{proof}
The first set of inequalities follows from the standard relations between \(\ell^p\) norms in finite dimensions applied to the sequence \(\{|x_\alpha|\|d\rho(E_\alpha)\|\}_{\alpha\in\Delta}\).  The bound \(\|d\rho(X_{\mathrm{root}})\|\le \mathcal{A}_1(X_{\mathrm{root}})\) is immediate from the triangle inequality:
\[
\|d\rho(X_{\mathrm{root}})\|
=
\Bigl\|\sum_{\alpha\in\Delta} x_\alpha d\rho(E_\alpha)\Bigr\|
\le
\sum_{\alpha\in\Delta} |x_\alpha|\,\|d\rho(E_\alpha)\|
=
\mathcal{A}_1(X_{\mathrm{root}}).
\]
\end{proof}

\begin{example}[Root activity in higher-spin representations of \(\mathrm{SU}(2)\)]\label{ex:su2-root-functionals}
Let \(G=\mathrm{SU}(2)\) and let \(\rho_j\) be the irreducible representation of highest weight \(j\in\frac{1}{2}\mathbb{Z}_{\ge 0}\), of dimension \(2j+1\) \cite[\S7.6]{hall2015lie,fultonharris1991}.  In a standard weight basis \(\{v_m : m=-j,-j+1,\dots,j\}\) with
\[
d\rho_j(H)v_m = 2m\,v_m,
\]
the raising and lowering operators satisfy
\[
d\rho_j(E_\alpha)v_m = c_{j,m}\,v_{m+1},
\qquad
d\rho_j(E_{-\alpha})v_m = c_{j,m-1}\,v_{m-1},
\]
with
\(c_{j,m} = \sqrt{(j-m)(j+m+1)}\) \cite[\S7.2]{fultonharris1991}.  One checks that
\[
\|d\rho_j(E_\alpha)\|_{\mathrm{op}}
=
\max_{-j\le m<j} |c_{j,m}|
=
\sqrt{j(j+1)},
\]
and similarly for \(E_{-\alpha}\).  For a Hamiltonian generator \(X\in\mathfrak{su}(2)\) as in Example~\ref{ex:su2-root-decomp}, we therefore obtain
\[
\mathcal{A}_2(X_{\mathrm{root}})
=
\Bigl(
|x_\alpha|^2 + |x_{-\alpha}|^2
\Bigr)^{1/2}
\sqrt{j(j+1)},
\]
and
\[
\mathcal{A}_1(X_{\mathrm{root}})
=
\bigl(|x_\alpha|+|x_{-\alpha}|\bigr)\sqrt{j(j+1)}.
\]
Thus for a fixed Lie algebra element \(X\), the root activity of the Hamiltonian grows like \(\sqrt{j(j+1)}\) with the spin parameter \(j\), reflecting the fact that the operator norm of \(d\rho_j(E_\alpha)\) grows with the highest weight \cite[\S7.5]{hall2015lie}.  Theorem~\ref{thm:curvature-error} will then give Hamiltonian-simulation error bounds scaling in the same way, but without further dependence on the ambient dimension \(2j+1\).
\end{example}

\subsection{Root curvature and commutator strength}

We next define a functional that combines the toral and root components of
\(X\) via the root evaluations \(\alpha(X_0)\).  As we will see, this functional directly controls the commutators that enter Trotter-error bounds for simulating the Hamiltonian.

\begin{definition}[Root curvature]\label{def:root-curvature}
Let \(X\in\mathfrak{g}\) with decomposition \eqref{eq:root-decomp}.  The
\emph{root curvature} of \(X\) in the representation \(\rho\) is
\[
\mathcal{C}(X)
\coloneqq
\Bigl(\sum_{\alpha\in\Delta}
\bigl|\alpha(X_0)\bigr|^2\,|x_\alpha|^2\,
\|d\rho(E_\alpha)\|_{\mathrm{op}}^2\Bigr)^{1/2}.
\]
\end{definition}

This quantity appears naturally when estimating commutators between the toral
and root components of \(X\), which in turn determine the leading terms in BCH expansions for the simulated Hamiltonian evolution.

\begin{lemma}[Curvature controls torus--root commutators]\label{lem:comm-norm}
Let \(X\in\mathfrak{g}\) with decomposition \(X=X_0+X_{\mathrm{root}}\), and set
\[
A=d\rho(X_0), \qquad B=d\rho(X_{\mathrm{root}}).
\]
Then
\[
[A,B]
=
\sum_{\alpha\in\Delta}
x_\alpha\,\alpha(X_0)\,d\rho(E_\alpha),
\]
and there exists a constant \(C_{\mathrm{struct}}>0\), depending only on the structure constants of \(\mathfrak{g}\) and on the chosen normalization of root vectors, such that
\[
\bigl\|
[A,B]
\bigr\|_{\mathrm{op}}
\le
C_{\mathrm{struct}}\,\mathcal{C}(X).
\]
\end{lemma}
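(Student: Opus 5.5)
The plan is to get the identity from the fact that $d\rho$ is a Lie algebra homomorphism, and the norm bound from the triangle inequality followed by Cauchy--Schwarz. Extend $d\rho$ complex-linearly to $\mathfrak{g}_\mathbb{C}$; this extension is still a Lie algebra homomorphism into $\mathrm{End}(V)$. By \eqref{eq:root-decomp}, $X_{\mathrm{root},\mathbb{C}}=\sum_{\alpha\in\Delta}x_\alpha E_\alpha$, and so $B=\sum_\alpha x_\alpha\,d\rho(E_\alpha)$. Then
\[
[A,B]=\sum_{\alpha\in\Delta} x_\alpha\,[d\rho(X_0),d\rho(E_\alpha)]=\sum_{\alpha\in\Delta} x_\alpha\,d\rho([X_0,E_\alpha])=\sum_{\alpha\in\Delta} x_\alpha\,\alpha(X_0)\,d\rho(E_\alpha).
\]
The last equality uses the defining property $[H,E_\alpha]=\alpha(H)E_\alpha$ from Definition~\ref{def:root}, applied with $H=X_0\in\mathfrak{t}\subset\mathfrak{t}_\mathbb{C}$. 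This gives the displayed formula for the commutator.

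For the norm bound, apply the triangle inequality to get
\[
\|[A,B]\|_{\mathrm{op}}\le\sum_{\alpha\in\Delta}|\alpha(X_0)|\,|x_\alpha|\,\|d\rho(E_\alpha)\|_{\mathrm{op}}.
\]
The right-hand side is the $\ell^1$ norm of the sequence $c_\alpha:=|\alpha(X_0)|\,|x_\alpha|\,\|d\rho(E_\alpha)\|_{\mathrm{op}}$, while $\mathcal{C}(X)$ is its $\ell^2$ norm. Cauchy--Schwarz, as in Lemma~\ref{lem:Ap-basic}, gives $\|c\|_{\ell^1}\le|\Delta|^{1/2}\|c\|_{\ell^2}$. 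We may therefore take $C_{\mathrm{struct}}=|\Delta|^{1/2}$. This constant depends only on the root system of $\mathfrak{g}$, so it is a fortiori structural.

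I expect the only delicate point to be bookkeeping rather than analysis. One must check that the real element $X_0\in\mathfrak{t}$ from Definition~\ref{def:toral-root} complexifies to exactly the $\mathfrak{t}_\mathbb{C}$-component in \eqref{eq:root-decomp}. This holds because the root spaces are orthogonal to $\mathfrak{t}_\mathbb{C}$ under the bilinear extension of $-B$, so the orthogonal projection onto $\mathfrak{t}$ commutes with complexification; this is the content of Lemma~\ref{lem:unique-decomp}. It then follows that $X_{\mathrm{root}}$ complexifies to $\sum_\alpha x_\alpha E_\alpha$, which justifies the expansion of $B$ used above.
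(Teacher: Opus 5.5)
Your proposal is correct and matches the paper's proof. The commutator identity comes from applying the complexified homomorphism $d\rho$ to $[X_0,E_\alpha]=\alpha(X_0)E_\alpha$. The bound follows from the triangle inequality and Cauchy--Schwarz in $\ell^2(\Delta)$, giving $C_{\mathrm{struct}}=\sqrt{|\Delta|}$. You also check that the real orthogonal projection onto $\mathfrak{t}$ agrees with the $\mathfrak{t}_\mathbb{C}$-component of the root decomposition; the paper only glosses this point (``take complexifications and then restrict back''), and your argument via the bilinear orthogonality of $\mathfrak{t}_\mathbb{C}$ and the root spaces is sound.
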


\begin{proof}
The identity for the commutator follows from the defining property of root vectors:
\([H,E_\alpha] = \alpha(H)E_\alpha\) for all \(H\in\mathfrak{t}_\mathbb{C}\).  Applying \(d\rho\) and summing over roots yields the stated expression when we take complexifications and then restrict back to \(\mathfrak{g}\).

For the norm bound, write
\[
[A,B]
=
\sum_{\alpha\in\Delta}
x_\alpha\,\alpha(X_0)\,d\rho(E_\alpha).
\]
Using the triangle inequality and Cauchy--Schwarz in \(\ell^2(\Delta)\),
\[
\bigl\|
[A,B]
\bigr\|
\le
\sum_{\alpha\in\Delta}
|x_\alpha|\,|\alpha(X_0)|\,\|d\rho(E_\alpha)\|
\le
\sqrt{|\Delta|}\,\mathcal{C}(X),
\]
so we may take \(C_{\mathrm{struct}} = \sqrt{|\Delta|}\).  This shows that the curvature functional \(\mathcal{C}(X)\) directly controls the strength of the commutator between the toral and root parts of the Hamiltonian generator.
\end{proof}

\begin{example}[Curvature for \(\mathfrak{su}(2)\) Hamiltonians]\label{ex:su2-curvature}
Continue with the notation of Examples~\ref{ex:su2-root-decomp} and \ref{ex:su2-root-functionals}.  For \(X_0 = a\,iH\) we have
\[
\alpha(X_0) = 2a i,
\]
and hence \(|\alpha(X_0)| = 2|a|\).  Using \(\|d\rho_j(E_\alpha)\|_{\mathrm{op}}=\sqrt{j(j+1)}\) as in Example~\ref{ex:su2-root-functionals}, the curvature functional is
\[
\mathcal{C}(X)
=
\sqrt{|\alpha(X_0)|^2|x_\alpha|^2 + |\alpha(X_0)|^2|x_{-\alpha}|^2}\,
\sqrt{j(j+1)}
=
2|a|\sqrt{|x_\alpha|^2 + |x_{-\alpha}|^2}\,\sqrt{j(j+1)}.
\]
For a fixed Hamiltonian generator \(X\), the curvature again grows like \(\sqrt{j(j+1)}\) with the highest weight.  In Theorem~\ref{thm:curvature-error}, this curvature will control the leading commutator error \(\|[A,B]\|\), where \(A=d\rho_j(X_0)\) and \(B=d\rho_j(X_{\mathrm{root}})\), in the symmetric Trotter splitting of the spin-\(j\) Hamiltonian evolution.
\end{example}

\subsection{Invariance properties}

The functionals \(\mathcal{A}_p\) and \(\mathcal{C}\) transform naturally under
the symmetries of the root system and of the representation.  This is important for Hamiltonian simulation, since one may freely conjugate a Hamiltonian by a unitary change of basis without changing its simulation complexity.

\begin{proposition}[Weyl invariance]\label{prop:weyl-invariance}
Let \(W\) be the Weyl group of \((\mathfrak{g},\mathfrak{t})\), and let
\(w\in W\) act on \(\mathfrak{g}\) via the adjoint action of a representative in
the normalizer \(N_G(T)\).  With the normalization of root vectors above, we have
\[
\mathcal{A}_p(wX) = \mathcal{A}_p(X),\qquad
\mathcal{C}(wX) = \mathcal{C}(X)
\]
for all \(X\in\mathfrak{g}\) and \(1\le p\le\infty\).
\end{proposition}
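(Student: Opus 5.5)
Let \(n\in N_G(T)\) be a representative of \(w\) and set \(g=n\). The plan is to track how the root decomposition of \(X\) transforms under \(\operatorname{Ad}(n)\), and then use unitarity of \(\rho(n)\) to compare operator norms.

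\textbf{Step 1 (toral and root parts).} Since \(\operatorname{Ad}(n)\) preserves \(\mathfrak{t}\) and is an isometry for \(\langle\cdot,\cdot\rangle_{\mathfrak{g}}\), it preserves \(\mathfrak{t}^\perp\). Uniqueness in Lemma~\ref{lem:unique-decomp} then gives
\[
(wX)_0=\operatorname{Ad}(n)X_0,\qquad (wX)_{\mathrm{root}}=\operatorname{Ad}(n)X_{\mathrm{root}}.
\]

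\textbf{Step 2 (root spaces are permuted).} For \(H\in\mathfrak{t}_\mathbb{C}\) we have \([H,\operatorname{Ad}(n)E_\alpha]=\operatorname{Ad}(n)[\operatorname{Ad}(n)^{-1}H,E_\alpha]\). Hence \(\operatorname{Ad}(n)\mathfrak{g}_\alpha=\mathfrak{g}_{w\alpha}\), where \((w\alpha)(H)=\alpha(\operatorname{Ad}(n)^{-1}H)\). Since root spaces are one-dimensional, \(\operatorname{Ad}(n)E_\alpha=c_\alpha E_{w\alpha}\) for some \(c_\alpha\in\mathbb{C}^\times\). Applying \(\operatorname{Ad}(n)\) to \eqref{eq:root-decomp} and comparing coefficients yields
\[
(wX)_{w\alpha}=c_\alpha x_\alpha,
\]
and \(\alpha\mapsto w\alpha\) is a bijection of \(\Delta\).

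\textbf{Step 3 (norms).} Since \(d\rho(\operatorname{Ad}(n)Y)=\rho(n)\,d\rho(Y)\,\rho(n)^{-1}\) with \(\rho(n)\) unitary, we get
\[
|c_\alpha|\,\|d\rho(E_{w\alpha})\|=\|d\rho(c_\alpha E_{w\alpha})\|=\|d\rho(E_\alpha)\|.
\]
Therefore
\[
|(wX)_{w\alpha}|\,\|d\rho(E_{w\alpha})\|=|x_\alpha|\,\|d\rho(E_\alpha)\|.
\]
This is exactly the rescaling invariance of Remark~\ref{rem:normalization}, so the scalars \(c_\alpha\) drop out. Reindexing the sum over \(\Delta\) by \(\beta=w\alpha\) gives \(\mathcal{A}_p(wX)=\mathcal{A}_p(X)\) for all \(1\le p\le\infty\), including the supremum case.

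\textbf{Step 4 (curvature).} For \(\mathcal{C}\) we additionally need \((w\alpha)\bigl((wX)_0\bigr)=\alpha\bigl(\operatorname{Ad}(n)^{-1}\operatorname{Ad}(n)X_0\bigr)=\alpha(X_0)\). This follows from the definition of \(w\alpha\) in Step 2 together with Step 1. Each weight \(|\alpha(X_0)|^2\) is therefore carried along with the corresponding term, and the same reindexing gives \(\mathcal{C}(wX)=\mathcal{C}(X)\).

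The only delicate point is Step 2. We must check that the coefficient of \(wX\) along \(E_{w\alpha}\) is computed with respect to the fixed basis \(\{E_\beta\}\), not the transported basis \(\{\operatorname{Ad}(n)E_\alpha\}\). This is resolved by the scalar \(c_\alpha\), which the product \(|x_\alpha|\,\|d\rho(E_\alpha)\|\) absorbs. The result does not depend on the chosen representative \(n\): two representatives differ by an element of \(T\), which rescales each \(E_\alpha\) by a unit complex number and leaves \(X_0\) fixed.
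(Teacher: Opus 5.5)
Your proof is correct and follows the same route as the paper's. In both, $\operatorname{Ad}(n)$ permutes the one-dimensional root spaces, the coefficients pick up scalars, operator norms are preserved by conjugation with the unitary $\rho(n)$, and the sums are reindexed by $\beta=w\alpha$.

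There are two small differences, both in your favour. First, the paper asserts that the scalars $u_\alpha$ are complex units, which tacitly requires a common normalization of the $E_\alpha$; you instead allow arbitrary $c_\alpha\in\mathbb{C}^\times$ and absorb them via $|c_\alpha|\,\|d\rho(E_{w\alpha})\|=\|d\rho(E_\alpha)\|$. Second, the paper only says the multisets $\{|x_\alpha|\}$ and $\{|\alpha(X_0)|\}$ are permuted, whereas you verify $(w\alpha)((wX)_0)=\alpha(X_0)$, so the matching of terms is genuinely term by term.
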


\begin{proof}
Elements of \(N_G(T)\) preserve \(\mathfrak{t}\) and permute the root spaces \(\mathfrak{g}_\alpha\).  Since the adjoint action of \(G\) preserves \(\langle\cdot,\cdot\rangle_{\mathfrak{g}}\) and each \(\mathfrak{g}_\alpha\) is one-dimensional, for a representative \(g\in N_G(T)\) we have
\[
\operatorname{Ad}(g)E_\alpha = u_\alpha E_{w\alpha}
\]
for some complex unit \(u_\alpha\).  Writing
\(X_{\mathrm{root},\mathbb{C}} = \sum_\alpha x_\alpha E_\alpha\), the root component of \(wX = \operatorname{Ad}(g)X\) has coefficients \(x'_{w\alpha} = u_\alpha x_\alpha\).  Thus \(|x'_{w\alpha}| = |x_\alpha|\), and the families \(\{|x_\alpha|\}\) and \(\{|x'_\beta|\}\) agree up to permutation.

Similarly, the root evaluations satisfy \((w\alpha)(X_0) = \alpha(w^{-1}X_0)\), so the multiset \(\{|\alpha(X_0)|\}\) is also permuted.  Since the operator norms \(\|d\rho(E_\alpha)\|\) are preserved under unitary conjugation by \(\rho(g)\) (because \(d\rho(\operatorname{Ad}(g)Y) = \rho(g) d\rho(Y)\rho(g)^{-1}\)), the sums defining \(\mathcal{A}_p\) and \(\mathcal{C}\) are invariant.  Thus the root activity and curvature of a Hamiltonian generator \(X\) depend only on its Weyl orbit, as long as we keep the representation \(\rho\) fixed.
\end{proof}

\begin{example}[Weyl symmetry for \(\mathfrak{su}(2)\) Hamiltonians]\label{ex:su2-weyl}
For \(\mathfrak{g}=\mathfrak{su}(2)\), the Weyl group \(W\cong\mathbb{Z}/2\mathbb{Z}\) acts on \(\mathfrak{t}\cong\mathbb{R}\) by \(H\mapsto -H\) \cite[\S13.1]{humphreys1972}.  The nontrivial element can be realized as conjugation by
\[
w =
\begin{pmatrix}
0 & -1\\
1 & 0
\end{pmatrix}
\in N_G(T).
\]
For a Hamiltonian generator \(X=X_0+X_{\mathrm{root}}\) with \(X_0=a\,iH\) and \(X_{\mathrm{root}}=x_\alpha E_\alpha + x_{-\alpha}E_{-\alpha}\) as in Example~\ref{ex:su2-root-decomp}, we have
\[
wXw^{-1} = -a\,iH + x_{-\alpha}E_\alpha + x_\alpha E_{-\alpha},
\]
i.e., \(|x_\alpha|\) and \(|x_{-\alpha}|\) are swapped and \(|\alpha(X_0)|\) is preserved.  Hence \(\mathcal{A}_p(wX)=\mathcal{A}_p(X)\) and \(\mathcal{C}(wX)=\mathcal{C}(X)\), illustrating Proposition~\ref{prop:weyl-invariance} concretely for Hamiltonians.
\end{example}

\begin{proposition}[Functoriality under intertwiners]\label{prop:intertwiners}
Let \(\rho_1 : G\to U(V_1)\) and \(\rho_2 : G\to U(V_2)\) be unitary representations, and let \(\phi : V_1 \to V_2\) be an isometric intertwiner.  Then
\[
\mathcal{A}_p^{(\rho_1)}(X)
=
\mathcal{A}_p^{(\rho_2)}(X),
\qquad
\mathcal{C}^{(\rho_1)}(X)
=
\mathcal{C}^{(\rho_2)}(X)
\]
for all Hamiltonian generators \(X\in\mathfrak{g}\) and \(1\le p\le\infty\).
\end{proposition}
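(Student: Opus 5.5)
The plan is to separate the data in \(\mathcal{A}_p\) and \(\mathcal{C}\) that depend on \(X\) alone from the data that depend on the representation. By Lemma~\ref{lem:unique-decomp}, the toral component \(X_0\), the coefficients \(x_\alpha\), and the evaluations \(\alpha(X_0)\) are determined by \(X\), \(\mathfrak{t}\) and the fixed root vectors \(E_\alpha\). None of them involves \(\rho_1\) or \(\rho_2\). Both functionals are therefore the same expression in the weights \(w_\alpha^{(i)} \coloneqq \|d\rho_i(E_\alpha)\|_{\mathrm{op}}\), for \(i=1,2\). The proposition reduces to the identity \(\|d\rho_1(E_\alpha)\|_{\mathrm{op}}=\|d\rho_2(E_\alpha)\|_{\mathrm{op}}\) for every \(\alpha\in\Delta\), and I will prove it in that form. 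Equality of the finite sums for \(p<\infty\), of the suprema for \(p=\infty\), and of the weighted sum defining \(\mathcal{C}\) then follows termwise.

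First I complexify the intertwining relation. The relation \(\phi\rho_1(g)=\rho_2(g)\phi\) differentiates to \(\phi\, d\rho_1(Y)=d\rho_2(Y)\,\phi\) for \(Y\in\mathfrak{g}\). By complex linearity of the extensions of \(d\rho_i\) to \(\mathfrak{g}_\mathbb{C}\), it holds for \(Y=E_\alpha\). Since \(\phi\) is an isometry, for any unit \(v\in V_1\)
\[
\|d\rho_1(E_\alpha)v\| = \|\phi\, d\rho_1(E_\alpha)v\| = \|d\rho_2(E_\alpha)\phi v\| \le \|d\rho_2(E_\alpha)\|_{\mathrm{op}} .
\]
This gives \(\|d\rho_1(E_\alpha)\|\le\|d\rho_2(E_\alpha)\|\).

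Next I prove the reverse inequality by a block decomposition. Set \(W=\phi(V_1)\subset V_2\). It is \(\rho_2\)-invariant, and because \(\rho_2\) is unitary, \(W^\perp\) is invariant as well. Hence each \(d\rho_2(E_\alpha)\), with \(E_\alpha\) taken in \(\mathfrak{g}_\mathbb{C}\), preserves both summands. Its operator norm is the maximum of the norms of the two diagonal blocks. The \(W\)-block is unitarily conjugate through \(\phi\) to \(d\rho_1(E_\alpha)\), so it has the same norm.

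The remaining step, which I expect to be the main obstacle, is the \(W^\perp\)-block. I will treat \(\phi\) in the sense used throughout the paper, namely that the functionals depend only on the isomorphism class of the representation (item~(5) of the introduction). In that sense \(\phi\) is an isometric intertwiner identifying \(V_1\) with \(V_2\), so \(W=V_2\) and the \(W^\perp\)-block is absent. The two norms then coincide, and \(\|d\rho_2(E_\alpha)\| = \|\phi\, d\rho_1(E_\alpha)\phi^{-1}\| = \|d\rho_1(E_\alpha)\|\).

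The block decomposition also shows exactly what is needed in general. Equality holds precisely when the \(W^\perp\)-block does not exceed the \(W\)-block in norm. I would state this explicitly in the proof, so that the identification of \(V_1\) with its image \(W\) is visibly the mechanism behind the equality.
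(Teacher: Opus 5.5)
Your reduction to $\|d\rho_1(E_\alpha)\|_{\mathrm{op}}=\|d\rho_2(E_\alpha)\|_{\mathrm{op}}$ is correct and follows the paper's route. So are the complexified intertwining relation, the bound $\|d\rho_1(E_\alpha)\|_{\mathrm{op}}\le\|d\rho_2(E_\alpha)\|_{\mathrm{op}}$, and the block decomposition of $d\rho_2(E_\alpha)$ over $W\oplus W^\perp$.

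The gap is the step ``so $W=V_2$''. Nothing in the hypothesis makes $\phi$ surjective. Appealing to item~(5) of the introduction is circular, since that claim is itself derived from this proposition.

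Moreover, the obstacle you isolated cannot be removed: the proposition is false for non-surjective isometric intertwiners. Take the following data:
\begin{itemize}
\item $G=\mathrm{SU}(2)$;
\item $V_1=\mathbb{C}$ with the trivial representation;
\item $V_2=\mathbb{C}\oplus\mathbb{C}^2$ with $\rho_2=\mathbf{1}\oplus(\text{defining})$;
\item $\phi(v)=(v,0)$.
\end{itemize}
This $\phi$ is an isometric intertwiner. Since $d\rho_1(E_{\pm\alpha})=0$, the functionals $\mathcal{A}_p^{(\rho_1)}$ and $\mathcal{C}^{(\rho_1)}$ vanish identically. But $\|d\rho_2(E_{\pm\alpha})\|_{\mathrm{op}}=1$. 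Take $X=iH+E_\alpha-E_{-\alpha}$ in the notation of Example~\ref{ex:su2-root-decomp}. Then $x_{\pm\alpha}=\pm1$ and $|\alpha(X_0)|=2$, so $\mathcal{A}_1^{(\rho_2)}(X)=2$ and $\mathcal{C}^{(\rho_2)}(X)=2\sqrt2$.

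The paper's one-line proof has exactly this gap. The phrase ``since $\phi$ is an isometry, it preserves operator norms'' only gives
\[
\|d\rho_1(E_\alpha)\|_{\mathrm{op}}=\|d\rho_2(E_\alpha)|_{\phi(V_1)}\|_{\mathrm{op}}\le\|d\rho_2(E_\alpha)\|_{\mathrm{op}}.
\]
Your block analysis is therefore sharper than the published argument.

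The right way to finish is to change the statement rather than reinterpret the hypothesis.
\begin{itemize}
\item For a unitary (surjective isometric) intertwiner, your argument proves the stated equalities. This covers Example~\ref{ex:functorial-trivial} and the isomorphism-class invariance claimed in the introduction.
\item For a general isometric intertwiner, it proves $\mathcal{A}_p^{(\rho_1)}(X)\le\mathcal{A}_p^{(\rho_2)}(X)$ and $\mathcal{C}^{(\rho_1)}(X)\le\mathcal{C}^{(\rho_2)}(X)$, because both functionals are monotone in the weights $\|d\rho(E_\alpha)\|_{\mathrm{op}}$.
\item Equality holds whenever, for every root $\alpha$, the $W^\perp$-block of $d\rho_2(E_\alpha)$ has norm at most that of the $W$-block.
\end{itemize}
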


\begin{proof}
Intertwining means \(\phi\,d\rho_1(Y) = d\rho_2(Y)\,\phi\) for all \(Y\in\mathfrak{g}\).  Since \(\phi\) is an isometry, it preserves operator norms, hence
\(\|d\rho_1(E_\alpha)\|=\|d\rho_2(E_\alpha)\|\) for all \(\alpha\).  The formulas
for \(\mathcal{A}_p\) and \(\mathcal{C}\) therefore agree for \(\rho_1\) and \(\rho_2\).
\end{proof}

\begin{example}[Tensoring with a trivial representation]\label{ex:functorial-trivial}
Let \(\rho : G\to U(V)\) be a unitary representation and let \(\mathbf{1}\) denote the trivial representation on \(\mathbb{C}\).  Then \(\rho\otimes\mathbf{1}\) is unitarily equivalent to \(\rho\), with intertwiner
\[
\phi : V \longrightarrow V\otimes\mathbb{C},\qquad v\mapsto v\otimes 1.
\]
By Proposition~\ref{prop:intertwiners}, the functionals \(\mathcal{A}_p\) and \(\mathcal{C}\) are unchanged when we pass from \(\rho\) to \(\rho\otimes\mathbf{1}\).  This illustrates that our invariants are insensitive to such trivial enlargements of the Hilbert space, a basic desideratum in Hamiltonian simulation where ancilla systems are frequently appended without changing the intrinsic difficulty of simulating the main Hamiltonian \cite{nielsen2000}.
\end{example}

These observations justify viewing \(\mathcal{A}_p\) and \(\mathcal{C}\) as
representation-theoretic features of \((\rho,X)\) which are intrinsic up to
isomorphism and well adapted to Hamiltonian-simulation questions.

\section{Torus--Root Splittings and Error Bounds}
\label{sec:splittings}

In this section we analyze a simple product formula that splits the
Hamiltonian evolution generated by a Lie algebra element \(X\) into its toral
and root components.  Throughout we continue to work with the fixed compact
semisimple Lie group \(G\), its Lie algebra \(\mathfrak{g}\), the unitary
representation \(\rho : G \to U(V)\), and its differential
\(d\rho : \mathfrak{g} \to \mathfrak{u}(V)\).  For \(X \in \mathfrak{g}\) we
write
\[
U_X(t) \coloneqq e^{t d\rho(X)}, \qquad t \in \mathbb{R},
\]
for the corresponding unitary evolution.  The root functionals
\(\mathcal{C}(X)\) and \(\mathcal{A}_1(X_{\mathrm{root}})\), introduced in
Section~\ref{sec:root-profiles}, will appear in the error bounds and quantify
how the cost of simulating \(U_X(t)\) depends on the root profile of \(X\).

\subsection{A symmetric torus--root splitting}

Let \(X\in\mathfrak{g}\) be given, and decompose
\[
X = X_0 + X_{\mathrm{root}}
\]
as in \eqref{eq:root-decomp}, where \(X_0 \in \mathfrak{t}\) is the toral
component (lying in the fixed Cartan subalgebra \(\mathfrak{t}\)) and
\(X_{\mathrm{root}} \in \mathfrak{g}\) is the root component.  Applying the
Lie algebra representation \(d\rho\), we obtain skew-Hermitian operators
\[
A \coloneqq d\rho(X_0),\qquad
B \coloneqq d\rho(X_{\mathrm{root}}),
\]
so that
\[
d\rho(X) = A + B
\]
is the generator of the Hamiltonian evolution \(U_X(t) = e^{t(A+B)}\).

\begin{definition}[Symmetric torus--root splitting]\label{def:splitting}
For \(t\in\mathbb{R}\), the \emph{symmetric torus--root splitting} of the
evolution \(e^{t(A+B)}\) is defined by
\[
S(t)
\coloneqq
e^{\frac{t}{2}A}\, e^{tB}\, e^{\frac{t}{2}A}.
\]
\end{definition}

This is a second-order Strang-type splitting
\cite{hairer2006geometric,iserles2000lie,blanes2009magnus,blanescasas2016},
adapted to the decomposition of the Hamiltonian generator into its toral and
root parts.  In the context of Hamiltonian simulation, \(S(t)\) will be used
as an explicit approximation to the exact evolution \(U_X(t)\).

\begin{example}[Symmetric splitting for \(\mathfrak{su}(2)\) fields]
\label{ex:su2-strang}
Let \(G=\mathrm{SU}(2)\) and let \(\rho_j\) denote the spin-\(j\) representation
on the corresponding Hilbert space \(V_j\).  In the defining (spin-\(\tfrac12\))
representation, write the Pauli matrices as
\[
\sigma_x =
\begin{pmatrix}
0 & 1\\ 1 & 0
\end{pmatrix},\quad
\sigma_y =
\begin{pmatrix}
0 & -i\\ i & 0
\end{pmatrix},\quad
\sigma_z =
\begin{pmatrix}
1 & 0\\ 0 & -1
\end{pmatrix},
\]
and set \(H \coloneqq \sigma_z\).  Consider a Hamiltonian generator
\[
X = X_0 + X_{\mathrm{root}}
\quad\text{with}\quad
X_0 = -i h H,\qquad
X_{\mathrm{root}} = -i(\omega_x \sigma_x + \omega_y \sigma_y),
\]
for real parameters \(h,\omega_x,\omega_y\).  Here \(X_0\) is toral (it lies
in the Cartan subalgebra spanned by \(H\)), while \(X_{\mathrm{root}}\) is a
linear combination of the off-diagonal root directions corresponding to the
raising and lowering operators \(E_{\pm\alpha}\) in the
\(\mathfrak{su}(2)\) root decomposition.

For any spin \(j\), we set
\[
A = d\rho_j(X_0),\qquad B = d\rho_j(X_{\mathrm{root}}),
\]
so that \(U_X(t) = e^{t(A+B)}\) is the time evolution generated by \(X\) in
representation \(\rho_j\).  The symmetric torus--root splitting is
\[
S(t) = e^{\frac{t}{2}A}\, e^{tB}\, e^{\frac{t}{2}A},
\]
which can be interpreted as a half-step evolution in the longitudinal field
\(h\), followed by a full-step evolution in the transverse fields
\(\omega_x,\omega_y\), and finally another half-step in \(h\).  The
curvature functional \(\mathcal{C}(X)\) and the activity
\(\mathcal{A}_1(X_{\mathrm{root}})\) associated with this \(X\) are those
computed from the \(\mathfrak{su}(2)\) root data and the operators
\(d\rho_j(E_{\pm\alpha})\); in particular, they control the error in
approximating \(U_X(t)\) by \(S(t)\) through the bounds established in
Theorem~\ref{thm:curvature-error}.
\end{example}

\subsection{BCH expansion and local simulation error}

To bound the error \(\|e^{t(A+B)}-S(t)\|\) for small \(t\), we expand both sides
using the BCH formula and compare terms.  This is standard in the analysis of Lie-group integrators and Trotter-type Hamiltonian-simulation schemes.

We recall a basic estimate (see, for example, \cite{hall2015lie,iserles2000lie,blanes2009magnus,hochbruckostermann2010,magnus1954}):

\begin{lemma}[BCH estimate]\label{lem:BCH}
Let \(\mathcal{B}\) be a finite-dimensional normed algebra, and let
\(X,Y\in\mathcal{B}\) satisfy \(\|X\|+\|Y\| \le r\) for some \(r>0\) sufficiently small.  Then
\[
\log(e^X e^Y)
=
X+Y+\tfrac{1}{2}[X,Y] + R_3(X,Y),
\]
where the remainder satisfies
\[
\|R_3(X,Y)\|
\le
C_{\mathrm{BCH}} r\,\|[X,Y]\|
\]
for some constant \(C_{\mathrm{BCH}}>0\) depending only on \(r\).
\end{lemma}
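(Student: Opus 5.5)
The plan is to bypass the Dynkin series, whose coefficients are awkward to sum, and work instead with the integral (Poincaré) form of the BCH formula, in which the factor \([X,Y]\) can be pulled out explicitly.

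\textbf{Preliminaries.} Since \(\mathcal{B}\) is finite-dimensional, I may replace the given norm by an equivalent submultiplicative norm. Any change of norm only alters \(C_{\mathrm{BCH}}\) and the smallness threshold on \(r\), so this costs nothing. For \(W\in\mathcal{B}\) I write \(\operatorname{ad}_W Z=[W,Z]\), so that \(\|\operatorname{ad}_W\|\le 2\|W\|\).

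I introduce the analytic functions
\[
\psi(w)=\frac{w\log w}{w-1},\qquad g(z)=\frac{e^z-1}{z}.
\]
\(\psi\) is analytic near \(w=1\) with \(\psi(1)=1\); \(g\) is entire with \(g(0)=1\). Write \(\psi(w)-1=(w-1)\varphi(w)\), where \(\varphi\) is analytic on \(|w-1|<1\) and \(\varphi(1)=\psi'(1)=\tfrac12\).

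\textbf{Integral formula.} Let \(Z(s)=\log(e^Xe^{sY})\) for \(s\in[0,1]\). This is well defined for \(r\) small, because \(\|e^Xe^{sY}-1\|\le e^{r}-1<1\). The standard derivative-of-the-logarithm computation gives
\[
\log(e^Xe^Y)=X+\int_0^1\psi\bigl(e^{\operatorname{ad}_X}e^{s\operatorname{ad}_Y}\bigr)Y\,ds .
\]
The series for \(\psi\) converges because
\[
\|e^{\operatorname{ad}_X}e^{s\operatorname{ad}_Y}-1\|\le e^{2r}-1<1
\]
for \(r\) small.

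\textbf{Extracting the commutator.} Set \(w_s=e^{\operatorname{ad}_X}e^{s\operatorname{ad}_Y}\). Subtracting \(Y=\int_0^1 Y\,ds\), I get
\[
\log(e^Xe^Y)-X-Y=\int_0^1\varphi(w_s)(w_s-1)Y\,ds .
\]
Here \(\varphi(w_s)\) and \(w_s-1\) commute, being functions of the same operator. Since \(\operatorname{ad}_Y Y=0\), we have \(e^{s\operatorname{ad}_Y}Y=Y\). Therefore
\[
(w_s-1)Y=(e^{\operatorname{ad}_X}-1)Y=g(\operatorname{ad}_X)[X,Y].
\]
This is the key structural step: every term beyond \(X+Y\) is an operator acting on \([X,Y]\). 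It follows that
\[
\log(e^Xe^Y)-X-Y=\int_0^1\varphi(w_s)\,g(\operatorname{ad}_X)\,[X,Y]\,ds .
\]

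\textbf{Isolating the \(\tfrac12[X,Y]\) term.} I split the operator in front of \([X,Y]\) as
\[
\varphi(w_s)g(\operatorname{ad}_X)=\tfrac12+\bigl(\varphi(w_s)-\tfrac12\bigr)g(\operatorname{ad}_X)+\tfrac12\bigl(g(\operatorname{ad}_X)-1\bigr).
\]
The constant \(\tfrac12\) contributes exactly \(\tfrac12[X,Y]\). The remainder \(R_3\) is the integral of the other two pieces applied to \([X,Y]\).

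\textbf{Bounding the remainder.} By analyticity, \(\|\varphi(w)-\tfrac12\|\le K_1\|w-1\|\) whenever \(\|w-1\|\le\tfrac12\). Also \(\|w_s-1\|\le e^{2r}-1\le 4r\) for \(r\) small. Similarly \(\|g(\operatorname{ad}_X)-1\|\le e^{2r}-1\le 4r\), and \(\|g(\operatorname{ad}_X)\|\le e^{2r}\le 2\). Combining these gives
\[
\|R_3\|\le\bigl(8K_1+2\bigr)r\,\|[X,Y]\|,
\]
so one can take \(C_{\mathrm{BCH}}=8K_1+2\), up to the norm-equivalence constant.

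\textbf{Main obstacle.} The only delicate point is the justification of the integral formula: differentiating \(\log\) along the curve \(s\mapsto e^Xe^{sY}\) in a general finite-dimensional algebra, while staying inside the domain of convergence of \(\psi\). I would cite the standard derivation, e.g. Hall's proof of the BCH formula, which works verbatim in any Banach algebra. The remaining steps only require keeping all operator norms below \(\tfrac12\) by choosing \(r\) small, and tracking the resulting explicit constants.
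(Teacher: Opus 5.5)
Your proof is correct. The paper gives no argument for Lemma~\ref{lem:BCH}: it recalls the estimate as standard and cites Hall, Iserles, Blanes et al.\ and Magnus. Your proposal therefore supplies what the paper leaves to the references, via Hall's integral form of the BCH formula.

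What your derivation adds over the bare citation is an explanation of the factor $\|[X,Y]\|$. The observation $e^{s\operatorname{ad}_Y}Y=Y$ gives $(w_s-1)Y=g(\operatorname{ad}_X)[X,Y]$, so everything beyond $X+Y$ is an operator applied to $[X,Y]$. A crude degree-by-degree bound on the Dynkin series would only yield $O(r^3)$. To recover the commutator factor that way, one must track that each right-nested bracket ends in $\pm[X,Y]$, and then sum the coefficients, which is exactly the awkward step you avoid. Your route also gives an explicit constant $8K_1+2$, independent of $r$ once $r\le 1/8$ and the norm is submultiplicative. It also makes visible that the formula is exact when $[X,Y]=0$.

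Two small housekeeping points:
\begin{enumerate}
\item $e^X$ and $\log$ require a unit, so assume $\mathcal{B}$ is unital (or pass to its unitization). This is also what lets the left-regular operator norm serve as your equivalent submultiplicative norm.
\item If the given norm is not submultiplicative, the equivalence constants enter both $C_{\mathrm{BCH}}$ and the smallness threshold on $r$.
\end{enumerate}

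The one external input is the derivative formula for $s\mapsto\log(e^Xe^{sY})$, together with $\log e^{\operatorname{ad}_Z}=\operatorname{ad}_Z$ for small $Z$. As you note, this holds verbatim in any finite-dimensional Banach algebra.
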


In our Hamiltonian-simulation setting, \(\mathcal{B}=\mathcal{B}(V)\) is the algebra of bounded operators on the representation space.  Since \(\mathcal{B}(V)\) is finite-dimensional, the constants in Lemma~\ref{lem:BCH} can be chosen uniformly on bounded sets.

We will also use the local Lipschitz continuity of the exponential map: there exists a constant \(L>0\) such that
\begin{equation}\label{eq:exp-lipschitz}
\|e^X-e^Y\|\le L\|X-Y\|
\end{equation}
for \(\|X\|,\|Y\|\) bounded by a fixed constant; see, for example, \cite[Prop.~3.41]{hall2015lie}.

\subsection{Curvature-controlled error bound}

We may now state and prove the main error bound.  Recall that, by \eqref{eq:norm-equivalence}, there is a constant \(\Lambda>0\) such that \(\|A\|,\|B\|\le \Lambda\|X\|_{\mathfrak{g}}\) for the generator \(X\).

\begin{theorem}[Error bound in terms of root functionals]\label{thm:curvature-error}
Let \(X\in\mathfrak{g}\) with decomposition \(X = X_0 + X_{\mathrm{root}}\), and
let \(S(t)\) be the symmetric splitting defined in Definition~\ref{def:splitting}.  Then for each fixed Hamiltonian generator \(X\) there exist constants \(t_0(X)>0\) and
\(C(X)>0\), depending on \(\mathfrak{g}\), \(\rho\), and \(X\), such that
for all \(|t|\le t_0(X)\),
\[
\bigl\|
e^{t(A+B)} - S(t)
\bigr\|_{\mathrm{op}}
\le
C(X)\,|t|^{3}\,\bigl(\mathcal{C}(X)+\mathcal{A}_1(X_{\mathrm{root}})\bigr),
\]
where \(A=d\rho(X_0)\), \(B=d\rho(X_{\mathrm{root}})\), and \(\mathcal{A}_1,\mathcal{C}\) are as in Definitions~\ref{def:root-activity} and \ref{def:root-curvature}.
\end{theorem}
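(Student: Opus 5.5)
The bound will follow from the standard integral representation of the Strang error, which expresses \(e^{t(A+B)}-S(t)\) through the commutators \([A,B]\), \([A,[A,B]]\) and \([B,[A,B]]\). Each of these carries at least one factor of \([A,B]\), and Lemma~\ref{lem:comm-norm} controls that factor by \(\mathcal{C}(X)\).

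\textbf{Step 1 (error representation).} Set \(E(t)\coloneqq S(t)-e^{t(A+B)}\). Differentiating gives \(S'(t)=(A+B)S(t)+R(t)\), where
\[
R(t)=\tfrac12\bigl(e^{\frac t2A}e^{tB}Ae^{-tB}e^{-\frac t2A}-A\bigr)S(t)+\bigl(e^{\frac t2A}Be^{-\frac t2A}-B\bigr)S(t).
\]
By variation of constants, \(E(t)=\int_0^t e^{(t-s)(A+B)}R(s)\,ds\). Because \(A,B\) are skew-Hermitian, every exponential is unitary, so \(\|E(t)\|\le\int_0^{|t|}\|R(s)\|\,ds\).

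\textbf{Step 2 (second-order cancellation).} Expand the conjugations with Duhamel identities such as
\[
e^{sB}Ae^{-sB}-A=\int_0^s e^{uB}[B,A]e^{-uB}\,du .
\]
The \(O(s)\) part of \(R(s)\) is \(\tfrac12 s[B,A]+\tfrac12 s[A,B]=0\); this is the symmetry of the splitting. What remains is an expression in which each term contains a double commutator \([A,[A,B]]\) or \([B,[A,B]]\), conjugated by unitaries and multiplied by \(s^2\). Consequently
\[
\|R(s)\|\le c\,s^2\bigl(\|[A,[A,B]]\|+\|[B,[A,B]]\|\bigr).
\]
Integrating yields
\[
\|E(t)\|\le c'|t|^3\bigl(\|A\|+\|B\|\bigr)\|[A,B]\|.
\]
If one prefers, Lemma~\ref{lem:BCH} applied twice to the symmetric product gives the same estimate, with the identity \(\log S(t)=t(A+B)+O(t^3)\) followed by \eqref{eq:exp-lipschitz}; Lemma~\ref{lem:BCH} then fixes \(t_0(X)\) through \(|t|(\|A\|+\|B\|)\le r\).

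\textbf{Step 3 (root functionals).} Lemma~\ref{lem:comm-norm} gives \(\|[A,B]\|\le\sqrt{|\Delta|}\,\mathcal{C}(X)\). The remaining factors \(\|A\|+\|B\|\le 2\Lambda\|X\|_{\mathfrak g}\) depend only on \(X\) and can be absorbed into \(C(X)\). This already gives the stated bound, since \(\mathcal{C}(X)\le\mathcal{C}(X)+\mathcal{A}_1(X_{\mathrm{root}})\). Alternatively, one can keep \(\|B\|\) explicit through \(\|B\|\le\mathcal{A}_1(X_{\mathrm{root}})\) from Lemma~\ref{lem:Ap-basic}. Using \(\|[A,B]\|\le 2\|A\|\,\|B\|\) in that part, the contribution \(\|B\|\,\|[A,B]\|\) is bounded by \(2\Lambda\|X\|_{\mathfrak g}\|B\|\le C\,\mathcal{A}_1(X_{\mathrm{root}})\), and \(\|A\|\,\|[A,B]\|\) is bounded by a constant times \(\mathcal{C}(X)\). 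Either route gives \(C(X)\) depending only on \(\mathfrak g\), \(\rho\) and \(\|X\|_{\mathfrak g}\).

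\textbf{Main obstacle.} The delicate step is Step 2. One must check that the first-order terms cancel exactly and that every surviving remainder still contains a commutator factor, so that the bound is proportional to \(\|[A,B]\|\) and not merely to \((\|A\|+\|B\|)^3\). I would verify this by writing the Duhamel expansions to second order with explicit integral remainders, rather than relying on the constant \(C_{\mathrm{BCH}}\) in Lemma~\ref{lem:BCH}, whose remainder is stated only for a single product \(e^Xe^Y\) and not for the three-factor symmetric product.
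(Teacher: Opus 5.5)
Your proof is correct, but it reaches the bound by a different mechanism than the paper.

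The paper argues at the level of logarithms. It applies Lemma~\ref{lem:BCH} to $e^{\frac t2A}e^{tB}$, and then, via \emph{standard Strang-splitting expansions} that are cited rather than derived, to the three-factor product. This gives $\log S(t)=t(A+B)+t^3R_{A,B}(t)$, with $R_{A,B}$ bounded by $\|[A,B]\|$, $\|[A,[A,B]]\|$ and $\|[B,[A,B]]\|$. The paper transfers this to the group through the local Lipschitz bound \eqref{eq:exp-lipschitz}. It then estimates the commutators root by root, e.g.\ $[A,[A,B]]=\sum_\alpha x_\alpha\alpha(X_0)^2\,d\rho(E_\alpha)$ and $\|[B,[A,B]]\|\le 2\mathcal{A}_1(X_{\mathrm{root}})\,C_{\mathrm{struct}}\,\mathcal{C}(X)$.

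You instead write the defect exactly by variation of constants and expand the conjugations by Duhamel's formula. Your $R(t)$ is right and the $O(s)$ terms do cancel; one gets, for instance, $\|R(s)\|\le\tfrac38 s^2\|[A,[A,B]]\|+\tfrac14 s^2\|[B,[A,B]]\|$.

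Your route buys several things:
\begin{itemize}
\item Since only unitarity of the exponentials is used, the estimate holds for all real $t$, so $t_0(X)$ is unnecessary.
\item The constants in front of the double commutators are absolute, rather than depending on $\|A\|,\|B\|$ through BCH convergence and a Lipschitz constant.
\item No branch of the matrix logarithm is needed.
\item You bypass exactly the unproved three-factor BCH remainder that you identified as the obstacle.
\item It shows that the summand $\mathcal{A}_1(X_{\mathrm{root}})$ is superfluous: $C(X)|t|^3\mathcal{C}(X)$ alone already holds.
\end{itemize}

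The paper's route, in exchange, yields a backward-error statement. For small $|t|$ one has $S(t)=\exp\bigl(t(A+B)+O(t^3)\bigr)$ with a modified generator that is a Lie series in $A,B$, and hence lies in $d\rho(\mathfrak{g})$. This fits the Lie-theoretic framing and is the natural starting point for the higher-order Suzuki extension suggested in Section~\ref{sec:further}.

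One small slip occurs in your alternative bookkeeping in Step~3. Since $\|B\|\,\|[A,B]\|\le 2\|A\|\,\|B\|^2$, your intermediate bound $2\Lambda\|X\|_{\mathfrak g}\|B\|$ is missing a factor $\Lambda\|X\|_{\mathfrak g}$. This is harmless because it is absorbed into $C(X)$, and your main route does not use it.
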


\begin{proof}
For each fixed pair \((A,B)\) the operator norms \(\|A\|\) and \(\|B\|\) are finite.  Choose \(t_0>0\) small enough that \(|t|(\|A\|+\|B\|)\le r\) for all \(|t|\le t_0\), where \(r>0\) is as in Lemma~\ref{lem:BCH}.  All constants below may depend on \(\|A\|\), \(\|B\|\), and hence on the Hamiltonian generator \(X\).

Consider first the logarithm of \(S(t)\).  Write
\[
S(t)
=
e^{\frac{t}{2}A} e^{tB} e^{\frac{t}{2}A}
=
\bigl(e^{\frac{t}{2}A} e^{tB}\bigr)e^{\frac{t}{2}A}.
\]
Apply Lemma~\ref{lem:BCH} to \(\log(e^{\frac{t}{2}A} e^{tB})\) to obtain
\[
\log\bigl( e^{\frac{t}{2}A} e^{tB}\bigr)
=
\frac{t}{2}A + tB + \frac{t^2}{4}[A,B] + R_3^{(1)}(t),
\]
with \(\|R_3^{(1)}(t)\|\le C'_1 |t|^3 \|[A,B]\|\) for some constant \(C'_1\) depending on \(A\) and \(B\).  Applying Lemma~\ref{lem:BCH} again to the product
\[
e^{\log(e^{\frac{t}{2}A} e^{tB})} e^{\frac{t}{2}A}
\]
and using standard Strang-splitting expansions (see, for example, \cite[Sec.~II.4]{hairer2006geometric} or \cite{blanes2009magnus,blanescasas2016}) yields an expansion of the schematic form
\[
\log S(t)
=
t(A+B) + t^3 R_{A,B}(t),
\]
where the remainder \(R_{A,B}(t)\) satisfies a bound
\[
\|R_{A,B}(t)\|
\le
C'_2\bigl(\|[A,[A,B]]\| + \|[B,[A,B]]\| + \|[A,B]\|\bigr)
\]
for some constant \(C'_2\) depending on \(\|A\|\) and \(\|B\|\).

On the other hand,
\[
\log e^{t(A+B)} = t(A+B).
\]
Therefore
\[
\log e^{t(A+B)} - \log S(t)
=
- t^3 R_{A,B}(t).
\]
By local Lipschitz continuity of the exponential map \eqref{eq:exp-lipschitz}, there exists a constant \(L>0\), depending on \(\|A\|\) and \(\|B\|\), such that for all sufficiently small \(|t|\),
\[
\bigl\|
e^{t(A+B)} - S(t)
\bigr\|
\le
L \bigl\|
\log e^{t(A+B)} - \log S(t)
\bigr\|
\le
L |t|^{3} \|R_{A,B}(t)\|.
\]
Combining the previous inequalities we obtain
\[
\bigl\|
e^{t(A+B)} - S(t)
\bigr\|
\le
C_2 |t|^{3}
\bigl(
\|[A,[A,B]]\| + \|[B,[A,B]]\| + \|[A,B]\|
\bigr),
\]
for some constant \(C_2>0\) depending on \(A\) and \(B\).

The commutator \([A,B]\) is controlled by Lemma~\ref{lem:comm-norm}, which gives
\[
\|[A,B]\| \le C_{\mathrm{struct}}\,\mathcal{C}(X).
\]
For the nested commutators, note for example that
\[
[A,[A,B]] = [d\rho(X_0),[d\rho(X_0),d\rho(X_{\mathrm{root}})]].
\]
In terms of root vectors we have
\(
[X_0,[X_0,E_\alpha]] = \alpha(X_0)^2 E_\alpha,
\)
so
\[
[A,[A,B]]
=
\sum_{\alpha\in\Delta}
x_\alpha\,\alpha(X_0)^2\,d\rho(E_\alpha).
\]
Hence
\[
\|[A,[A,B]]\|
\le
\sum_{\alpha\in\Delta} |x_\alpha|\,|\alpha(X_0)|^2\,\|d\rho(E_\alpha)\|
\le
\sqrt{|\Delta|}\,M(X_0)\,\mathcal{C}(X),
\]
where
\[
M(X_0) \coloneqq \sup_{\alpha\in\Delta} |\alpha(X_0)|.
\]
Since \(\mathfrak{t}\) is finite-dimensional and \(\Delta\) is finite, there is a constant \(K>0\) such that \(|\alpha(H)|\le K\|H\|_{\mathfrak{g}}\) for all \(H\in\mathfrak{t}\) and all \(\alpha\in\Delta\), so \(M(X_0)\le K\|X_0\|_{\mathfrak{g}}\).  Thus \(\|[A,[A,B]]\|\) is bounded by a constant (depending on \(X\)) times \(\mathcal{C}(X)\).  The term \([B,[A,B]]\) can be expanded similarly and bounded using the same root-based data and the fact that \(\|B\|\) is controlled by \(\mathcal{A}_1(X_{\mathrm{root}})\) as in Lemma~\ref{lem:Ap-basic}.

Specifically,
\[
\|[B,[A,B]]\|\le 2\|B\|\,\|[A,B]\|
\le 2\mathcal{A}_1(X_{\mathrm{root}})\,C_{\mathrm{struct}}\,\mathcal{C}(X).
\]
Putting all these estimates together, we obtain
\[
\bigl\|
e^{t(A+B)} - S(t)
\bigr\|
\le
C(X')|t|^{3}\bigl(\mathcal{C}(X)+\mathcal{A}_1(X_{\mathrm{root}})\bigr)
\]
for some constant \(C(X')\) depending on \(X\), \(\mathfrak{g}\), and \(\rho\).  Renaming \(C(X')\) as \(C(X)\) completes the proof.
\end{proof}

\begin{corollary}[Flat toral and commuting cases]\label{cor:toral-exact}
If \(X\in\mathfrak{t}\), or more generally if \([d\rho(X_0),d\rho(X_{\mathrm{root}})]=0\), then the splitting \(S(t)\)
is exact for all \(t\), i.e.,
\[
S(t) = e^{t d\rho(X)}.
\]
In particular, this holds whenever \(\mathcal{C}(X)=0\).
\end{corollary}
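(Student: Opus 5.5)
The plan is to reduce everything to the single commuting relation \([A,B]=0\), with \(A=d\rho(X_0)\) and \(B=d\rho(X_{\mathrm{root}})\). I would not go through the BCH estimate of Theorem~\ref{thm:curvature-error} at all, since commuting exponentials can be handled exactly. I will first show that each hypothesis of the corollary implies \([A,B]=0\), and then show that \([A,B]=0\) forces \(S(t)=e^{t(A+B)}\) for every real \(t\), not only for \(|t|\le t_0(X)\).

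\textbf{Step 1: each hypothesis gives \([A,B]=0\).} If \(X\in\mathfrak{t}\), then Lemma~\ref{lem:unique-decomp} gives \(X_0=X\) and \(X_{\mathrm{root}}=0\). Hence \(B=0\) and the commutator vanishes trivially; this is also Example~\ref{ex:toral-zero-curvature}. The general commuting case is the hypothesis itself. For \(\mathcal{C}(X)=0\), every summand in Definition~\ref{def:root-curvature} is nonnegative, so each summand vanishes: \(|\alpha(X_0)|\,|x_\alpha|\,\|d\rho(E_\alpha)\|_{\mathrm{op}}=0\) for every \(\alpha\in\Delta\). Lemma~\ref{lem:comm-norm} gives
\[
[A,B]=\sum_{\alpha\in\Delta}x_\alpha\,\alpha(X_0)\,d\rho(E_\alpha).
\]
Each term of this sum is an operator of norm exactly \(|x_\alpha|\,|\alpha(X_0)|\,\|d\rho(E_\alpha)\|_{\mathrm{op}}=0\), so every term is zero. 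It is cleaner to argue term by term like this than to quote the inequality \(\|[A,B]\|\le C_{\mathrm{struct}}\mathcal{C}(X)\), although that inequality also suffices. This step does not need \(d\rho(E_\alpha)\neq 0\): if \(d\rho(E_\alpha)=0\), the term is simply absent.

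\textbf{Step 2: commuting operators give exactness.} Suppose \(AB=BA\). Then \(e^{sA}\) commutes with \(e^{tB}\) for all real \(s,t\), because each exponential is a norm-convergent power series in a single operator. Also \(e^{\frac t2A}e^{\frac t2A}=e^{tA}\), and \(e^{tA}e^{tB}=e^{t(A+B)}\) for commuting \(A,B\), which follows from the Cauchy product of the two series. Therefore
\[
S(t)=e^{\frac t2A}e^{tB}e^{\frac t2A}=e^{tA}e^{tB}=e^{t(A+B)}=e^{t\,d\rho(X)}
\]
for every real \(t\). This gives the claimed identity for all \(t\).

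The only step needing any care is the \(\mathcal{C}(X)=0\) case, specifically the passage from a vanishing sum of squares to the vanishing of each term of the commutator expansion. Even this is straightforward once Lemma~\ref{lem:comm-norm} is available. As a consistency check, the corollary also agrees with the error bound of Theorem~\ref{thm:curvature-error}: when \(\mathcal{C}(X)=0\), the commutator terms in that proof all vanish.
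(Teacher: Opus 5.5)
Your proposal is correct and follows the paper's proof. Like the paper, you reduce each hypothesis to $[A,B]=0$ using the commutator identity of Lemma~\ref{lem:comm-norm}, then combine commuting exponentials to get $S(t)=e^{tA}e^{tB}=e^{t(A+B)}$ for all $t$; your term-by-term handling of the case $\mathcal{C}(X)=0$ (each summand $x_\alpha\,\alpha(X_0)\,d\rho(E_\alpha)$ has norm zero) is a slightly tidier version of the paper's split into roots with $d\rho(E_\alpha)\neq 0$ and roots acting trivially.
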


\begin{proof}
If \([A,B]=0\), then \(e^{t(A+B)} = e^{tA} e^{tB} = e^{tA/2} e^{tB} e^{tA/2} = S(t)\) for all \(t\).  If \(X\in\mathfrak{t}\) then \(B=0\), so this applies.  If \(\mathcal{C}(X)=0\), then \(\alpha(X_0)x_\alpha=0\) whenever \(\|d\rho(E_\alpha)\|\neq 0\).  Thus all contributions from such roots to \([A,B]\) vanish, and the remaining root components act trivially in the representation, whence \([A,B]=0\).  In all these cases the torus--root splitting simulates the Hamiltonian evolution exactly.
\end{proof}

\begin{example}[Exact splitting for block-diagonal Hamiltonians]\label{ex:block-diagonal}
Let \(G=\mathrm{SU}(2^n)\) with the defining representation on \(V=(\mathbb{C}^2)^{\otimes n}\), and let \(X\in\mathfrak{su}(2^n)\) be a sum of commuting diagonal Pauli strings,
\[
X = -i\sum_{j=1}^m \lambda_j\,Z_{S_j},
\]
where \(Z_{S_j} = \bigotimes_{k=1}^n P_k^{(j)}\) with each \(P_k^{(j)}\in\{I,\sigma_z\}\) and the set \(\{Z_{S_j}\}\) is mutually commuting \cite{nielsen2000}.  Then \(X\) lies in a Cartan subalgebra \(\mathfrak{t}\) of \(\mathfrak{su}(2^n)\), and hence \(\mathcal{A}_p(X)=\mathcal{C}(X)=0\) for all \(p\).  Theorem~\ref{thm:curvature-error} and Corollary~\ref{cor:toral-exact} recover the exactness of the splitting
\[
S(t) = e^{\frac{t}{2}d\rho(X)} e^{0} e^{\frac{t}{2}d\rho(X)}
= e^{t d\rho(X)},
\]
which is a basic case in Hamiltonian simulation where no Trotter error is incurred despite nontrivial many-body couplings that nevertheless commute \cite{georgescu2014quantum,childs2021theory}.
\end{example}

\section{Root-Gate Circuits and Complexity Considerations}
\label{sec:root-gate-complexity}

We now introduce a simple circuit model for Hamiltonian simulation based on root-space generators and explain how
the root activity \(\mathcal{A}_1(X_{\mathrm{root}})\) enters lower bounds
on the circuit length needed to approximate \(U_X(t)\).  The framework is inspired by controllability considerations for Lie-group actions in quantum spin systems \cite{brockett1972system,jurdjevicsussmann1972,khaneja2001timeoptimal,schirmer2001,albertinidalessandro2003,dalessandro2007}.

Throughout this section we fix the pair \((G,\rho)\) and a maximal torus \(T\subset G\) with root system \(\Delta\).

\subsection{Root-gate model}

Fix \(s_0>0\).  We consider the following gate set associated with the representation
\(\rho : G\to U(V)\), designed to reflect the toral and root structure of Hamiltonian generators.

For each root \(\alpha\in\Delta^+\), choose real skew-Hermitian elements
\[
X_\alpha^{(1)} = E_\alpha - E_{-\alpha},\qquad
X_\alpha^{(2)} = i(E_\alpha + E_{-\alpha}) \in \mathfrak{g},
\]
which span (over \(\mathbb{R}\)) the real two-plane corresponding to the root pair \(\{\alpha,-\alpha\}\).  Their exponentials generate one-parameter subgroups of \(G\) that move weight along the root direction \(\alpha\).

\begin{definition}[Root-gate set]\label{def:root-gate-set}
The gate set \(\mathcal{G}_{\mathrm{root}}\) consists of all unitaries of the form
\[
e^{s\,d\rho(H)},\quad H\in\mathfrak{t},\quad |s|\le s_0,
\]
and
\[
e^{s\,d\rho(X_\alpha^{(k)})},\quad\alpha\in\Delta^+,\quad k\in\{1,2\},\quad|s|\le s_0.
\]
\end{definition}

In other words, we allow small rotations generated by elements of the torus and by
real skew-Hermitian combinations of individual root vectors.  These are natural ``root-direction'' gates for simulating Hamiltonians whose generators decompose along root spaces.

\begin{definition}[Root-gate length and minimal complexity]\label{def:root-length}
For a unitary \(U\in U(V)\), the \emph{root-gate length} \(\ell_{\mathrm{root}}(U)\)
is the smallest integer \(N\) such that there exist gates
\[
G_1,\dots,G_N \in \mathcal{G}_{\mathrm{root}}
\]
with
\[
U = G_1 \cdots G_N.
\]
If no such factorization exists, we set \(\ell_{\mathrm{root}}(U)=\infty\).

For a given \(X\in\mathfrak{g}\), time \(t>0\), and tolerance \(\varepsilon>0\), the \emph{minimal root-gate simulation complexity} is
\[
N_{\min}(X,t,\varepsilon)
\coloneqq
\inf\bigl\{
N : \exists\, G_1,\dots,G_N\in\mathcal{G}_{\mathrm{root}}
\ \text{with}\
\|G_1\cdots G_N - U_X(t)\|_{\mathrm{op}} \le \varepsilon
\bigr\}.
\]
\end{definition}

Since the Lie algebra generated by \(\mathfrak{t}\) and the planes spanned by \(\{X_\alpha^{(1)},X_\alpha^{(2)}\}_{\alpha\in\Delta^+}\) is all of \(\mathfrak{g}\), and \(G\) is connected, the associated one-parameter subgroups generate a dense subgroup of \(\rho(G)\).  Thus for the Hamiltonian-simulation problem on \(\rho(G)\) it is natural to expect \(\ell_{\mathrm{root}}(U)<\infty\) for all \(U\in\rho(G)\).

\begin{example}[Root gates for \(\mathrm{SU}(2)\) and single qubits]\label{ex:su2-root-gates}
For \(G=\mathrm{SU}(2)\) with \(\rho\) the defining representation, the maximal torus \(\mathfrak{t}\) is spanned by \(i\sigma_z\), and the real root directions can be taken as
\[
X_\alpha^{(1)} = \sigma_x,\qquad X_\alpha^{(2)} = \sigma_y,
\]
up to normalization.  The root-gate set consists of unitaries
\[
e^{s i\sigma_z},\quad e^{-i s \sigma_x},\quad e^{-i s \sigma_y},\qquad |s|\le s_0.
\]
In terms of usual quantum gates, these are phase gates and rotations about the \(x\) and \(y\) axes, which together generate all single-qubit unitaries for any fixed \(s_0>0\) by standard controllability arguments \cite{brockett1972system,khaneja2001timeoptimal,schirmer2001}.  Thus root-gate circuits recover the familiar single-qubit Hamiltonian-simulation setting \cite{nielsen2000}.
\end{example}

\subsection{A root-activity seminorm on \texorpdfstring{$\mathfrak{g}$}{g}}\label{subsec:activity-norm}

For lower bounds it is convenient to make explicit the dependence of \(\mathcal{A}_1\) on the Lie algebra norm and to record a norm-equivalence statement.

Let \(\langle\cdot,\cdot\rangle_{\mathfrak{g}}\) be the fixed \( Ad(G)\)-invariant inner product used in Section~\ref{sec:prelim}, and write \(\|Y\|_{\mathfrak{g}} = \sqrt{\langle Y,Y\rangle_{\mathfrak{g}}}\).  We define the \emph{root-activity seminorm}
\[
\|X\|_{\mathrm{act}} \coloneqq \mathcal{A}_1(X_{\mathrm{root}}),
\qquad X\in\mathfrak{g},
\]
so that \(\|X\|_{\mathrm{act}}\) depends only on the root component \(X_{\mathrm{root}}\) and coincides with the \(\ell^1\) root activity introduced in Definition~\ref{def:root-activity}.

\begin{lemma}[Norm equivalence]\label{lem:norm-equivalence-activity}
There exist constants \(m_1,M_1>0\), depending only on \((\mathfrak{g},\rho)\) and the chosen root data, such that
\[
m_1 \|Y_{\mathrm{root}}\|_{\mathfrak{g}}
\le
\|Y\|_{\mathrm{act}}
\le
M_1 \|Y_{\mathrm{root}}\|_{\mathfrak{g}}
\qquad\text{for all }Y\in\mathfrak{g}.
\]
In particular, for each \(Y\in\mathfrak{g}\) the seminorm \(\|\cdot\|_{\mathrm{act}}\) and the Euclidean norm \(\|\cdot\|_{\mathfrak{g}}\) are equivalent on the root subspace.
\end{lemma}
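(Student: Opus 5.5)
The plan is to view both sides as norms on one finite-dimensional real vector space and then appeal to the equivalence of norms. Let \(\mathfrak{r}\coloneqq\mathfrak{t}^\perp\subset\mathfrak{g}\) be the root subspace. By Lemma~\ref{lem:unique-decomp}, the map \(Y\mapsto Y_{\mathrm{root}}\) is the orthogonal projection onto \(\mathfrak{r}\). Moreover, \(\|Y\|_{\mathrm{act}}=\mathcal{A}_1(Y_{\mathrm{root}})\) depends on \(Y\) only through \(Y_{\mathrm{root}}\). It therefore suffices to show that the function \(N(Z)\coloneqq\sum_{\alpha\in\Delta}|z_\alpha|\,\|d\rho(E_\alpha)\|_{\mathrm{op}}\) on \(\mathfrak{r}\) satisfies \(m_1\|Z\|_{\mathfrak{g}}\le N(Z)\le M_1\|Z\|_{\mathfrak{g}}\), where \(Z_{\mathbb{C}}=\sum_\alpha z_\alpha E_\alpha\).

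First I check that \(N\) is a seminorm. The coefficient map \(Z\mapsto(z_\alpha)_{\alpha\in\Delta}\) is \(\mathbb{R}\)-linear from \(\mathfrak{r}\) into \(\mathbb{C}^{\Delta}\), because it comes from the direct-sum decomposition \eqref{eq:root-space-decomp-prelim} restricted to \(\mathfrak{r}_{\mathbb{C}}=\bigoplus_\alpha\mathfrak{g}_\alpha\). It is also injective. Hence \(N\) is a weighted \(\ell^1\) norm composed with a linear injection, which gives homogeneity and the triangle inequality immediately.

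Next comes the upper bound. Since \(\mathfrak{r}\) is finite-dimensional, each coordinate functional \(Z\mapsto z_\alpha\) is continuous, so \(|z_\alpha|\le c_\alpha\|Z\|_{\mathfrak{g}}\). Explicitly, by orthogonality of the root spaces one has \(z_\alpha=B_{\mathbb{C}}(Z,E_{-\alpha})/B_{\mathbb{C}}(E_\alpha,E_{-\alpha})\), and this gives a concrete \(c_\alpha\). Summing over \(\alpha\) gives \(M_1=\sum_\alpha c_\alpha\|d\rho(E_\alpha)\|_{\mathrm{op}}\).

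For the lower bound, I would restrict \(N\) to the unit sphere of \((\mathfrak{r},\|\cdot\|_{\mathfrak{g}})\). This sphere is compact, and \(N\) is continuous on it, so \(N\) attains a minimum \(m_1\ge0\). The real issue is \(m_1>0\), which requires \(N\) to be definite. If \(N(Z)=0\), then \(z_\alpha\,d\rho(E_\alpha)=0\) for every \(\alpha\). When \(d\rho(E_\alpha)\neq0\) for all \(\alpha\), this forces every \(z_\alpha=0\), so \(Z=0\).

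I expect this definiteness to be the main obstacle. It fails exactly when \(d\rho\) kills some root vector, for instance when \(\rho\) is trivial on a simple ideal of \(\mathfrak{g}\). I would handle this with the following observation. Since \(\ker d\rho\) is an ideal, it is a sum of simple ideals, and it contains either all root vectors of a given simple factor or none of them. So I would impose the standing assumption that \(d\rho\) is faithful, which is the case in all the examples (\(\mathrm{SU}(2)\) spin-\(j\) representations and the defining representation of \(\mathrm{SU}(2^n)\)). Under faithfulness \(\|d\rho(E_\alpha)\|_{\mathrm{op}}>0\) for every \(\alpha\), and then \(m_1\ge\min_\alpha\|d\rho(E_\alpha)\|_{\mathrm{op}}\cdot\min_{\|Z\|=1}\sum_\alpha|z_\alpha|>0\). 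Otherwise I would state the lower bound on the complement of \(\ker d\rho\). The final step is to substitute \(Z=Y_{\mathrm{root}}\).
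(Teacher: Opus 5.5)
Your argument is essentially the paper's: restrict to \(\mathfrak{t}^\perp\), note that \(\|\cdot\|_{\mathrm{act}}\) is a norm there, and use equivalence of norms in finite dimensions. Your explicit upper constant and compactness argument simply unpack that step. The one real difference is in your favour. The paper just asserts that \(\|\cdot\|_{\mathrm{act}}\) vanishes exactly on \(\mathfrak{t}\), whereas you correctly note that this needs \(d\rho(E_\alpha)\neq 0\) for every \(\alpha\), i.e.\ \(d\rho\) faithful; without it (e.g.\ \(\rho\) trivial on a simple factor) the lower bound with \(m_1>0\) genuinely fails as stated.
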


\begin{proof}
The root subspace \(\mathfrak{t}^\perp\subset\mathfrak{g}\) is finite-dimensional, and \(\|\cdot\|_{\mathrm{act}}\) is a genuine norm on \(\mathfrak{t}^\perp\) (it vanishes exactly on \(\mathfrak{t}\)).  Any two norms on a finite-dimensional vector space are equivalent, so there exist \(m_1,M_1>0\) such that
\[
m_1\|Z\|_{\mathfrak{g}} \le \|Z\|_{\mathrm{act}}\le M_1\|Z\|_{\mathfrak{g}}
\]
for all \(Z\in\mathfrak{t}^\perp\).  Since \(Y_{\mathrm{root}}\in\mathfrak{t}^\perp\) and \(\|Y\|_{\mathrm{act}}=\|Y_{\mathrm{root}}\|_{\mathrm{act}}\), the claim follows.
\end{proof}

\subsection{Geometric control of effective generators}

The next lemma controls the effective generator of a circuit in terms of the generators of its individual gates.  It uses the bi-invariant Riemannian metric on \(G\) induced by \(\langle\cdot,\cdot\rangle_{\mathfrak{g}}\).

\begin{lemma}[Geometric control]\label{lem:geometric-control}
Let
\[
W = \prod_{k=1}^{N} \exp(Y_k) \in G,
    \qquad Y_k\in\mathfrak{g},
\]
with \(\|Y_k\|_{\mathfrak{g}}\le s_0\) for all \(k\).  Then there exists \(Z\in\mathfrak{g}\) such that
\[
W = \exp(Z)
\quad\text{and}\quad
\|Z\|_{\mathfrak{g}} \le \sum_{k=1}^{N} \|Y_k\|_{\mathfrak{g}} \le N s_0.
\]
In particular,
\[
\|Z\|_{\mathrm{act}} \le M_1 \|Z\|_{\mathfrak{g}} \le M_1 N s_0,
\]
where \(M_1\) is the constant from Lemma~\ref{lem:norm-equivalence-activity}.
\end{lemma}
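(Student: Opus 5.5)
The plan is to argue through the bi-invariant Riemannian metric on \(G\) induced by \(\langle\cdot,\cdot\rangle_{\mathfrak{g}}\), using the geometric characterization of its geodesics. No BCH series is involved, so no smallness of \(N s_0\) is needed. Write \(d\) for the associated Riemannian distance on \(G\).

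\emph{Step 1: length of the concatenated path.} Set \(g_0=e\) and \(g_k=\exp(Y_1)\cdots\exp(Y_k)\), so that \(g_N=W\). On the \(k\)-th piece use the curve
\[
\gamma_k(\tau)=g_{k-1}\exp(\tau Y_k),\qquad \tau\in[0,1].
\]
Its velocity is the left translate of \(Y_k\). Since the metric is left-invariant, the speed is constantly \(\|Y_k\|_{\mathfrak{g}}\), and the length of \(\gamma_k\) is \(\|Y_k\|_{\mathfrak{g}}\). Concatenating these pieces gives a piecewise smooth curve from \(e\) to \(W\) of length \(\sum_k\|Y_k\|_{\mathfrak{g}}\). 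Therefore
\[
d(e,W)\le \sum_{k=1}^N \|Y_k\|_{\mathfrak{g}}\le N s_0 .
\]

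\emph{Step 2: realize the distance by a one-parameter subgroup.} This is the main point. Because \(G\) is compact, it is complete as a Riemannian manifold, so by Hopf--Rinow there is a minimizing geodesic \(c:[0,1]\to G\) from \(e\) to \(W\) whose length equals \(d(e,W)\).

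Next I invoke the standard fact that for a bi-invariant metric the Levi-Civita connection satisfies \(\nabla_UV=\tfrac12[U,V]\) on left-invariant fields. Consequently the geodesics through \(e\) are exactly the curves \(\tau\mapsto\exp(\tau Z)\). Hence \(c(\tau)=\exp(\tau Z)\) for some \(Z\in\mathfrak{g}\), which gives \(W=\exp(Z)\). Moreover the length of \(c\) is \(\|Z\|_{\mathfrak{g}}=d(e,W)\), and combining this with Step 1 yields the first claimed inequality. The element \(Z\) need not be unique, but the lemma only asserts existence.

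I expect this step to be the only nontrivial one, since it is where bi-invariance is genuinely used. A left-invariant metric alone would give Step 1 but not the identification of geodesics with one-parameter subgroups. I would cite the standard references (e.g.\ Helgason, Ch.~II) for this identification rather than re-derive it.

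\emph{Step 3: the activity bound.} By definition, \(\|Z\|_{\mathrm{act}}=\|Z_{\mathrm{root}}\|_{\mathrm{act}}\). Lemma~\ref{lem:norm-equivalence-activity} gives \(\|Z_{\mathrm{root}}\|_{\mathrm{act}}\le M_1\|Z_{\mathrm{root}}\|_{\mathfrak{g}}\). Since \(Z_{\mathrm{root}}\) is the orthogonal projection of \(Z\) onto \(\mathfrak{t}^\perp\) (Definition~\ref{def:toral-root}), we have \(\|Z_{\mathrm{root}}\|_{\mathfrak{g}}\le\|Z\|_{\mathfrak{g}}\). Chaining these estimates,
\[
\|Z\|_{\mathrm{act}}\le M_1\|Z\|_{\mathfrak{g}}\le M_1 N s_0,
\]
which completes the proof.
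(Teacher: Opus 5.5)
Your proof is correct and follows the same route as the paper: bound \(d(e,W)\) by the length \(\sum_k\|Y_k\|_{\mathfrak{g}}\) of the concatenated path, identify \(d(e,W)\) with \(\|Z\|_{\mathfrak{g}}\) for a minimizing geodesic \(\tau\mapsto\exp(\tau Z)\) of the bi-invariant metric, and finish with the upper half of the norm equivalence together with \(\|Z_{\mathrm{root}}\|_{\mathfrak{g}}\le\|Z\|_{\mathfrak{g}}\). Your Step~2 (Hopf--Rinow plus the fact that geodesics through \(e\) are one-parameter subgroups) is in fact the sound way to obtain such a \(Z\). The paper instead appeals to the injectivity radius and to surjectivity of \(\exp\), which only works when \(W\) lies within the injectivity radius of \(e\); it also writes \(d(e,\exp(Y_k))=\|Y_k\|_{\mathfrak{g}}\), where in general only the inequality \(\le\) holds, and that inequality is all your Step~1 uses.
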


\begin{proof}
Let \(d(\cdot,\cdot)\) denote the geodesic distance on \(G\) induced by the bi-invariant metric.  For each \(Y_k\), the curve \(t\mapsto \exp(tY_k)\) is a unit-speed geodesic of length \(\|Y_k\|_{\mathfrak{g}}\) from the identity to \(\exp(Y_k)\).  By left-invariance and the triangle inequality,
\[
d(e,W)
=
d\!\Bigl(e, \prod_{k=1}^{N} \exp(Y_k)\Bigr)
\le
\sum_{k=1}^{N} d\bigl(e,\exp(Y_k)\bigr)
=
\sum_{k=1}^{N} \|Y_k\|_{\mathfrak{g}}.
\]
On the other hand, for any \(Z\in\mathfrak{g}\) with \(W=\exp(Z)\) and \(\|Z\|\) less than the injectivity radius at the identity, the curve \(t\mapsto \exp(tZ)\) is a minimizing geodesic from \(e\) to \(W\), and hence
\[
d(e,W) = \|Z\|_{\mathfrak{g}}.
\]
Choosing such a \(Z\) (which is possible since the exponential map is surjective for compact, connected, semisimple groups) we obtain
\[
\|Z\|_{\mathfrak{g}} \le \sum_{k=1}^{N} \|Y_k\|_{\mathfrak{g}} \le N s_0.
\]
The inequality for \(\|Z\|_{\mathrm{act}}\) then follows from Lemma~\ref{lem:norm-equivalence-activity}.
\end{proof}

\subsection{Stability of the logarithm in a unitary representation}

The following lemma relates proximity of unitaries in operator norm to proximity of their logarithms in the Lie algebra, measured in the root-activity seminorm.

\begin{lemma}[Logarithm stability]\label{lem:log-stability}
Let \(X\in\mathfrak{g}\), \(t>0\), and suppose \(W\in G\) satisfies
\[
\bigl\|\rho(W) - e^{t\,d\rho(X)}\bigr\|_{\mathrm{op}} \le \varepsilon
\]
for some sufficiently small \(\varepsilon>0\).  Then there exists \(Z\in\mathfrak{g}\) with \(W=\exp(Z)\) such that
\[
\|Z - tX\|_{\mathrm{act}} \le C_\rho \,\varepsilon,
\]
for a constant \(C_\rho>0\) depending only on \(\rho\) and the choice of branch of the logarithm near \(e^{t\,d\rho(X)}\).
\end{lemma}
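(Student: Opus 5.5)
The plan is to pass from the group to the Lie algebra of operators using the principal matrix logarithm near $U_0 \coloneqq e^{t\,d\rho(X)}$. Then I would pull back to $\mathfrak{g}$ via $d\rho$, and finally compare norms using Lemma~\ref{lem:norm-equivalence-activity}. Since $\rho(W)$ and $U_0$ are unitary and close, $\rho(W) = U_0\,(U_0^{-1}\rho(W))$, where $V_\varepsilon \coloneqq U_0^{-1}\rho(W)$ satisfies $\|V_\varepsilon - I\|_{\mathrm{op}}\le\varepsilon$. For $\varepsilon<1$ the principal logarithm $\log V_\varepsilon$ is defined by its power series, is skew-Hermitian, and obeys $\|\log V_\varepsilon\|_{\mathrm{op}}\le -\log(1-\varepsilon)\le 2\varepsilon$ for small $\varepsilon$. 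This is the basic analytic input.

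\textbf{Step 1 (lifting to $\mathfrak{g}$).} I would show that $\log V_\varepsilon$ lies in $d\rho(\mathfrak{g})$ and equals $d\rho(Y)$ for some $Y$ with $\|Y\|_{\mathfrak{g}}\le c\,\varepsilon$. The route is through the group: $U_0^{-1}\rho(W)=\rho(\exp(-tX)W)$, and $\rho(G)$ is a compact embedded Lie subgroup of $U(V)$ with Lie algebra $d\rho(\mathfrak{g})$. Near the identity, $\rho(G)$ is an embedded submanifold, so for $\varepsilon$ small the element $g\coloneqq\exp(-tX)W$ is $\rho$-close to $e$. Moreover, within a neighbourhood where $\exp$ is a diffeomorphism on both sides, the unique small logarithm of $\rho(g)$ in $\mathfrak{u}(V)$ agrees with $d\rho(\log g)$.

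This step needs $\rho$ to be faithful, or at least $\ker d\rho=0$, so that $d\rho$ is injective on $\mathfrak{g}$. I will assume this, as is implicit in the norm equivalence \eqref{eq:norm-equivalence}. If it fails, one must replace $\mathfrak{g}$ by $\mathfrak{g}/\ker d\rho$. In either case $\|Y\|_{\mathfrak{g}}\le \lambda^{-1}\|d\rho(Y)\|_{\mathrm{op}}\le 2\lambda^{-1}\varepsilon$, where $\lambda>0$ is the lower norm-equivalence constant for $d\rho$ on $\mathfrak{g}$. \emph{This is the main obstacle.} The paper's hypotheses only place $W$ in $G$, so I have to argue that $\rho$-closeness forces $g$ into a small neighbourhood of the identity, possibly modulo the finite kernel of $\rho$. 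I would try to handle this by choosing the representative branch stated in the lemma ("choice of branch").

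\textbf{Step 2 (combining the logarithms).} Now $W=\exp(tX)\exp(Y)$, and I need a single $Z$ with $\exp(Z)=W$ that is close to $tX$. I would apply the inverse function theorem to $\exp$ at $tX$. When $tX$ is a regular point of $\exp$ (i.e., $\mathrm{ad}_{tX}$ has no eigenvalues in $2\pi i\mathbb{Z}\setminus\{0\}$), $\exp$ is a local diffeomorphism near $tX$. Hence there is a unique $Z$ near $tX$ with $\exp Z=\exp(tX)\exp(Y)$, and $\|Z-tX\|_{\mathfrak{g}}\le c'\|Y\|_{\mathfrak{g}}$. The constant $c'$ is controlled by the inverse of $\frac{1-e^{-\mathrm{ad}_{tX}}}{\mathrm{ad}_{tX}}$, which is exactly where the dependence on the branch enters. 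For small $t$ this is automatic; by BCH (Lemma~\ref{lem:BCH}), $Z=tX+Y+O(\|tX\|\,\|Y\|)$.

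\textbf{Step 3 (passing to the activity seminorm).} Since $\|\cdot\|_{\mathrm{act}}$ only sees root components and orthogonal projection onto $\mathfrak{t}^\perp$ is norm-nonincreasing, Lemma~\ref{lem:norm-equivalence-activity} gives $\|Z-tX\|_{\mathrm{act}}\le M_1\|(Z-tX)_{\mathrm{root}}\|_{\mathfrak{g}}\le M_1\|Z-tX\|_{\mathfrak{g}}$. Chaining the three steps yields $\|Z-tX\|_{\mathrm{act}}\le 2M_1c'\lambda^{-1}\varepsilon$. So I can take $C_\rho\coloneqq 2M_1c'\lambda^{-1}$, which depends only on $\rho$, the root data, and the branch of the logarithm at $e^{t\,d\rho(X)}$.
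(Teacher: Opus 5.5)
Your three steps are sound, but only under two hypotheses that the statement does not contain. The gap cannot be closed by a better argument, because the lemma is false without them.

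\textbf{Faithfulness in Step 1.} Step 1 needs $\rho$ to be faithful; $\ker d\rho=0$ is not enough, and ``choosing the representative branch'' cannot absorb a nontrivial kernel. Take $G=\mathrm{SU}(2)$, $\rho=\mathrm{Ad}$, $X=i\sigma_x$, $t=1$, $W=-\exp(X)$. Here $tX$ is regular, so any failure comes purely from $\ker\rho=\{\pm I\}$. Then $\rho(W)=e^{d\rho(X)}$, so the hypothesis holds for every $\varepsilon>0$. But any $Z$ with $\exp Z=W$ commutes with $W=\exp((1+\pi)i\sigma_x)\neq\pm I$. Hence $Z=i\psi\sigma_x$ with $\psi\equiv 1+\pi \pmod{2\pi}$, and $\|Z-X\|_{\mathrm{act}}\ge\pi\,\|i\sigma_x\|_{\mathrm{act}}>0$.

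\textbf{Regularity in Step 2.} The regularity of $tX$ you assume in Step 2 is genuinely necessary, even for faithful $\rho$. In the defining representation take $X=i\sigma_x$, $t=\pi$, and $W=-e^{i\delta\sigma_z}$ with $\delta\neq 0$ small. Then $\|\rho(W)-e^{t\,d\rho(X)}\|_{\mathrm{op}}=|1-e^{i\delta}|\le|\delta|$. Yet every logarithm of the regular torus element $W$ lies in $\mathfrak t=\mathbb{R}\,i\sigma_z$, so $\|Z-\pi X\|_{\mathrm{act}}=\|{-i\pi\sigma_x}\|_{\mathrm{act}}=2\pi$ for every admissible $Z$.

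\textbf{What your argument proves.} It establishes a corrected statement: for faithful $\rho$ and $tX$ a regular point of $\exp$, there exist $\varepsilon_0$ and $C$ depending on $tX$ (through $\|(d\exp_{tX})^{-1}\|$) such that the inequality holds. These constants blow up as $tX$ approaches the singular set (e.g.\ $t\uparrow\pi$ above), so $C$ cannot depend on $\rho$ alone. In particular, the lemma cannot supply the uniformity over bounded sets of $(X,t)$ that the proof of Theorem~\ref{thm:activity-lower-bound} draws from it.

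\textbf{Comparison with the paper.} The paper applies a Lipschitz branch of the operator logarithm directly at $e^{t\,d\rho(X)}$. It asserts $\log e^{t\,d\rho(X)}=t\,d\rho(X)$, and then that $\log\rho(W)=d\rho(Z)$ for some $Z$ with $\exp Z=W$. The second assertion fails in the $\mathrm{Ad}$ example, and the first fails at $t=\pi$ in the defining-representation example.

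Moreover, the first assertion needs that no two distinct eigenvalues of $t\,d\rho(X)$ differ by a nonzero element of $2\pi i\mathbb{Z}$. That is a condition on all weight differences of $\rho$, strictly stronger than regularity of $tX$. For instance, it fails in the faithful spin-$\tfrac32$ representation at $X=a\,iH$ with $at=\pi/2$, where $tX$ is regular.

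Your route takes the principal logarithm at the identity, lifts locally through the embedded subgroup $\rho(G)$, and inverts $\exp$ near $tX$ in $\mathfrak g$. This isolates the true obstruction and needs only the root-level condition, so it is the better proof of the corrected lemma. As a proof of the lemma as written, however, the gap is real, and it sits in the statement itself.
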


\begin{proof}
The operator \(e^{t\,d\rho(X)}\) is unitary.  For \(\varepsilon\) sufficiently small, the spectrum of \(\rho(W)\) lies in a fixed open arc of the unit circle containing the spectrum of \(e^{t\,d\rho(X)}\) and avoiding the branch cut of the logarithm.  By analytic functional calculus (see, for example, \cite[Ch.~VIII]{hall2015lie}), there is a branch of the matrix logarithm defined on a neighbourhood of this arc which is Lipschitz with respect to the operator norm; that is, there exists \(C'_\rho>0\) such that
\[
\bigl\|\log \rho(W) - \log e^{t\,d\rho(X)}\bigr\|_{\mathrm{op}} \le C'_\rho \varepsilon,
\]
where \(\log e^{t\,d\rho(X)} = t\,d\rho(X)\) by construction.

Since \(\rho\) is a representation, the image \(d\rho(\mathfrak{g})\) is a Lie subalgebra of \(\mathfrak{u}(V)\) and the exponential map on \(\mathfrak{g}\) is compatible with that on \(\mathfrak{u}(V)\).  In particular, there exists \(Z\in\mathfrak{g}\) with \(W=\exp(Z)\) and \(d\rho(Z) = \log \rho(W)\).  Hence
\[
\|d\rho(Z - tX)\|_{\mathrm{op}}
=
\bigl\|\log \rho(W) - t\,d\rho(X)\bigr\|_{\mathrm{op}}
\le
C'_\rho \varepsilon.
\]
By norm equivalence between \(\|d\rho(\cdot)\|_{\mathrm{op}}\) and \(\|\cdot\|_{\mathrm{act}}\) on \(\mathfrak{t}^\perp\), there exists \(C''_\rho>0\) such that
\[
\|Z - tX\|_{\mathrm{act}} \le C''_\rho\,\varepsilon.
\]
Setting \(C_\rho \coloneqq C''_\rho\) yields the claim.
\end{proof}

\subsection{Root-activity lower bound}

We are now ready to state and prove the root-activity lower bound.  The statement is formulated so that the constants depend only on the representation, the Lie algebra, and the gate step size \(s_0\), and it holds for all Hamiltonian generators \(X\).

\begin{theorem}[Root-activity lower bound]\label{thm:activity-lower-bound}
Let \(G\) be a connected compact semisimple Lie group with Lie algebra \(\mathfrak{g}\), let \(\rho : G\to U(V)\) be a finite-dimensional unitary representation, and fix \(s_0>0\).  There exist constants \(c_1,c_2>0\), depending only on \(\mathfrak{g}\), \(\rho\), and \(s_0\), with the following property.

For every Hamiltonian generator \(X\in\mathfrak{g}\), every time \(t>0\), and every sufficiently small \(\varepsilon>0\), the minimal root-gate complexity satisfies
\[
N_{\min}(X,t,\varepsilon)
\;\ge\;
c_1\, t\,\|X\|_{\mathrm{act}} - c_2.
\]
Equivalently, the number of root-gate steps needed to approximate \(U_X(t)\) up to accuracy \(\varepsilon\) cannot grow sublinearly in \(t\,\|X\|_{\mathrm{act}}\), up to representation-theoretic constants.
\end{theorem}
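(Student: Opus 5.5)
The plan is to combine the three preceding lemmas into a single chain of inequalities: an upper bound on the "size" of any short circuit, and a lower bound forced by proximity to \(U_X(t)\). Fix a root-gate circuit \(G_1\cdots G_N\) with \(\|G_1\cdots G_N-U_X(t)\|_{\mathrm{op}}\le\varepsilon\), where \(N\) is close to \(N_{\min}(X,t,\varepsilon)\).

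\textbf{Step 1: lift the circuit to \(G\) and bound its effective generator.} Each gate has the form \(e^{s\,d\rho(Y)}=\rho(\exp(sY))\), with \(Y\) either a normalized element of \(\mathfrak{t}\) or one of the \(X_\alpha^{(k)}\), and \(|s|\le s_0\). I would normalize the toral generators to \(\|H\|_{\mathfrak{g}}\le 1\); without this the gate set is not step-bounded and no lower bound can hold. Set
\[
s_0' \coloneqq s_0\max\bigl(1,\max_{\alpha,k}\|X_\alpha^{(k)}\|_{\mathfrak{g}}\bigr),
\]
so every gate is \(\rho(\exp Y_k)\) with \(\|Y_k\|_{\mathfrak{g}}\le s_0'\). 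Then \(W\coloneqq\prod_k\exp(Y_k)\in G\) satisfies \(\rho(W)=G_1\cdots G_N\). Lemma~\ref{lem:geometric-control} produces \(Z\in\mathfrak{g}\) with \(W=\exp Z\), where \(Z\) is the logarithm of minimal norm, i.e.\ \(\|Z\|_{\mathfrak{g}}=d(e,W)\). It gives
\[
\|Z\|_{\mathrm{act}}\le M_1 N s_0' .
\]

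\textbf{Step 2: use proximity to \(U_X(t)\).} Apply Lemma~\ref{lem:log-stability} to \(W\), which gives a logarithm \(Z'\) of \(W\) with
\[
\|Z'-tX\|_{\mathrm{act}}\le C_\rho\varepsilon .
\]
If \(Z'=Z\), the triangle inequality for the seminorm yields
\[
t\|X\|_{\mathrm{act}}\le \|Z\|_{\mathrm{act}}+C_\rho\varepsilon\le M_1 s_0' N + C_\rho\varepsilon .
\]
For \(\varepsilon\le 1\) this rearranges to
\[
N\ge \frac{t\,\|X\|_{\mathrm{act}}}{M_1 s_0'}-\frac{C_\rho}{M_1 s_0'},
\]
so \(c_1=1/(M_1s_0')\) and \(c_2=C_\rho/(M_1s_0')\), which depend only on \(\mathfrak{g}\), \(\rho\) and \(s_0\). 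Taking the infimum over admissible circuits gives the theorem.

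\textbf{Main obstacle: matching the two logarithms.} The difficulty is that \(Z\) (the geodesic logarithm from Step 1) and \(Z'\) (the logarithm near \(tX\) from Step 2) need not coincide. For large \(t\), \(tX\) lies beyond the injectivity radius, and \(W\) may have a short logarithm even though \(tX\) is long. This is exactly the case where a lower bound linear in \(t\) is in danger: for periodic one-parameter subgroups \(U_X(t)\) returns near the identity.

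I would try first to work in the regime where \(t\|X\|_{\mathfrak{g}}\) is below the injectivity radius \(r_{\mathrm{inj}}\). There the branch in Lemma~\ref{lem:log-stability} can be taken to be the principal one, and for small \(\varepsilon\) it agrees with the minimal-norm logarithm, so \(Z=Z'\). Outside this regime I see two options. One is to absorb the excess into the constants, using \(\|tX\|_{\mathrm{act}}\le M_1 t\|X\|_{\mathfrak{g}}\) together with the fact that \(\rho(G)\) has bounded diameter. The other is to interpret "sufficiently small \(\varepsilon\)" and the constant \(C_\rho\) as depending on the chosen branch near \(e^{t\,d\rho(X)}\), as Lemma~\ref{lem:log-stability} allows.

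I expect this branch-compatibility step to require the most care. In particular it needs checking that \(Z\) and \(Z'\) differ by an element whose activity seminorm is controlled by a constant depending only on \((\mathfrak{g},\rho)\) and not on \(t\). Such a constant would then be absorbed into \(c_2\).
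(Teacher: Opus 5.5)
Your Steps 1--2 follow the paper's proof exactly. It chains Lemma~\ref{lem:geometric-control} and Lemma~\ref{lem:log-stability}, takes $c_1=1/(M_1s_0)$ and $c_2=C_\rho\varepsilon_0/(M_1s_0)$, and silently uses one letter $Z$ for both the minimal-norm logarithm and the logarithm near $tX$. It also assumes $\|Y_k\|_{\mathfrak g}\le s_0$ without normalizing the toral gates, which you correctly fix.

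So you have found the real gap, but the step you defer cannot be completed. There is no $t$-independent bound on $\|Z'-Z\|_{\mathrm{act}}$, and the theorem is false once $t\|X\|_{\mathrm{act}}$ is large. Take $G=\mathrm{SU}(2)$ in the defining representation and $X=X_\alpha^{(1)}=E_\alpha-E_{-\alpha}=i\sigma_y$, with the normalization of Example~\ref{ex:su2-root-decomp}. Then $X_0=0$, $\|X\|_{\mathrm{act}}=2$, and $e^{tX}=\cos t\,I+\sin t\,(i\sigma_y)$. At $t=2\pi k$ one has $U_X(t)=I$, which a single gate with $s=0$ realizes exactly. Hence $N_{\min}(X,2\pi k,\varepsilon)\le 1$ for every $\varepsilon>0$, whereas $c_1\cdot 4\pi k-c_2\to\infty$. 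In this example $W=e$, $Z=0$ and $Z'=tX$, so the discrepancy is all of $tX$.

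The obstruction is general. Since $\mathfrak t$ and the $X_\alpha^{(k)}$ span $\mathfrak g$, a product of one small gate per basis direction covers a neighbourhood of $e$ by the inverse function theorem. Compactness and connectedness then bound $N_{\min}(X,t,\varepsilon)$ uniformly in $X$, $t$ and $\varepsilon$. So no bound $c_1t\|X\|_{\mathrm{act}}-c_2$ with fixed $c_1>0$ can hold for all $t$ unless $\rho$ is trivial.

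Neither of your fallbacks rescues the argument. Bounded diameter of $\rho(G)$ bounds $\|Z\|$ and $N_{\min}$ from above. It says nothing about $\|tX\|_{\mathrm{act}}$, which is unbounded, and it is exactly what defeats a linear-in-$t$ lower bound. Letting $C_\rho$ and the $\varepsilon$-threshold depend on the branch near $e^{t\,d\rho(X)}$ makes $c_2$ depend on $t$ and $X$, contrary to the statement. It also never yields $t\|X\|_{\mathrm{act}}\le\|Z\|_{\mathrm{act}}+O(\varepsilon)$, because only $Z'$ is close to $tX$ while only $Z$ obeys the length bound.

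What your argument does prove is a local version. Suppose $t\|X\|_{\mathfrak g}\le r<r_{\mathrm{inj}}$, and either $\rho$ is faithful or you work on $\rho(G)$ with the transported metric. Then $Z=Z'$ for small $\varepsilon$ and your chain of inequalities holds. But there the right-hand side is at most $c_1M_1r$, so the claimed linear growth in $t$ has no content. The correct global statement, for faithful $\rho$, is $N_{\min}\ge\bigl(d(e,\exp tX)-O(\varepsilon)\bigr)/s_0'$ in terms of the bi-invariant distance, and this never exceeds $\operatorname{diam}(G)/s_0'$.
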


\begin{proof}
Fix the representation \(\rho\) and gate step size \(s_0>0\).  Let \(X\in\mathfrak{g}\), \(t>0\), and \(0<\varepsilon\le\varepsilon_0\) be given, where \(\varepsilon_0>0\) is chosen small enough that Lemma~\ref{lem:log-stability} applies uniformly for all \(X\) with \(\|X\|_{\mathfrak{g}}\) in a fixed bounded set and all \(t\) in a bounded interval; we regard \(\varepsilon_0\) as fixed once and for all.

Consider any root-gate circuit of length \(N\) that \(\varepsilon\)-approximates \(U_X(t)\).  Thus there exist elements \(Y_k\in\mathfrak{g}\) with \(\|Y_k\|_{\mathfrak{g}}\le s_0\) and
\[
G_k \coloneqq e^{Y_k} \in \mathcal{G}_{\mathrm{root}},
\qquad
W \coloneqq G_1\cdots G_N,
\]
such that
\[
\bigl\|\rho(W) - e^{t\,d\rho(X)}\bigr\|_{\mathrm{op}} \le \varepsilon.
\]

By Lemma~\ref{lem:geometric-control}, there exists \(Z\in\mathfrak{g}\) with \(W=\exp(Z)\) and
\[
\|Z\|_{\mathrm{act}} \le M_1 N s_0,
\]
where \(M_1>0\) is the constant from Lemma~\ref{lem:norm-equivalence-activity}.  By Lemma~\ref{lem:log-stability}, there is a constant \(C_\rho>0\), depending only on the representation, such that
\[
\|Z - tX\|_{\mathrm{act}} \le C_\rho \varepsilon.
\]
Hence
\[
t\|X\|_{\mathrm{act}}
\le
\|Z\|_{\mathrm{act}} + C_\rho\varepsilon
\le
M_1 N s_0 + C_\rho\varepsilon.
\]
Rearranging yields
\[
N \ge \frac{t}{M_1 s_0}\,\|X\|_{\mathrm{act}} - \frac{C_\rho}{M_1 s_0}\,\varepsilon.
\]

Fixing \(0<\varepsilon\le\varepsilon_0\), we can write
\[
N
\ge
\frac{t}{M_1 s_0}\,\|X\|_{\mathrm{act}} - \frac{C_\rho}{M_1 s_0}\,\varepsilon_0
\]
because \(\varepsilon\le\varepsilon_0\).  Thus setting
\[
c_1 \coloneqq \frac{1}{M_1 s_0},
\qquad
c_2 \coloneqq \frac{C_\rho}{M_1 s_0}\,\varepsilon_0,
\]
we obtain the desired lower bound
\[
N_{\min}(X,t,\varepsilon)\ge N \ge c_1 t\|X\|_{\mathrm{act}} - c_2
\]
for all \(0<\varepsilon\le\varepsilon_0\).  The constants \(c_1,c_2\) depend only on \(\mathfrak{g}\), \(\rho\), \(s_0\), and the fixed choice of threshold \(\varepsilon_0\), not on \(X\), \(t\), or \(\varepsilon\) within this regime.
\end{proof}

\begin{remark}
Theorem~\ref{thm:activity-lower-bound} shows that the \(\ell^1\) root activity \(\|X\|_{\mathrm{act}}\) provides a genuine lower bound on the root-gate complexity of Hamiltonian simulation, up to explicit representation-dependent constants.  In particular, if \(\|X\|_{\mathrm{act}}\) grows linearly in a problem parameter (for example, the number of sites in a spin chain, or the strength of certain couplings), then any root-gate simulation of \(U_X(t)\) must exhibit a corresponding linear growth in circuit length in that parameter for fixed \(t\) and sufficiently small error.
\end{remark}

\begin{example}[Single-root Hamiltonians]\label{ex:single-root-H}
Suppose that the root component of the Hamiltonian generator is supported on a single root \(\alpha\), i.e.,
\[
X_{\mathrm{root}} = x_\alpha E_\alpha + x_{-\alpha}E_{-\alpha}.
\]
In a highest-weight representation, the weight diagram decomposes into \(\alpha\)-strings
\(\lambda, \lambda+\alpha,\lambda+2\alpha,\dots\) \cite[\S13.4]{humphreys1972}.  On each such string, the dynamics generated by \(X_{\mathrm{root}}\) is essentially that of an \(\mathfrak{su}(2)\)-subalgebra, and the analysis of Example~\ref{ex:su2-root-functionals} applies.  In this case \(\|X\|_{\mathrm{act}}\) is proportional to \(|x_\alpha|+|x_{-\alpha}|\), and Theorem~\ref{thm:activity-lower-bound} shows that the root-gate simulation complexity grows at least linearly in \(t(|x_\alpha|+|x_{-\alpha}|)\), up to a fixed additive constant, which matches the expected scaling from the underlying two-level structure.
\end{example}

\section{Examples from Multi-Spin Hamiltonians}
\label{sec:examples}

We illustrate the root-space framework in the familiar setting of multi-spin systems, where \(G=\mathrm{SU}(2^n)\) acts on \(V = (\mathbb{C}^2)^{\otimes n}\) via its defining representation.  This is the standard Hilbert space for spin-\(\frac12\) Hamiltonians in quantum simulation.  For background on spin chains and quantum magnetism see, for example, \cite{sachdev2011quantum,georgescu2014quantum,lanyon2011}.

Throughout this section we write \(A(n)\simeq B(n)\) to mean that there exist constants \(c,C>0\), independent of \(n\), such that
\[
c\,B(n)\le A(n)\le C\,B(n).
\]

\subsection{The group \(\mathrm{SU}(2^n)\) and Pauli Hamiltonians}

Let \(V = (\mathbb{C}^2)^{\otimes n}\), with standard computational basis labeled by bit strings \(z \in \{0,1\}^n\).  The group \(G=\mathrm{SU}(2^n)\) acts on \(V\) by the defining representation \(\rho_{\mathrm{def}}\).  Its Lie algebra
\(\mathfrak{su}(2^n)\) can be identified with traceless skew-Hermitian matrices on \(V\), and is spanned by operators of the form
\[
i\,P_1\otimes\cdots\otimes P_n,
\]
where each \(P_j\in\{I,\sigma_x,\sigma_y,\sigma_z\}\) and not all \(P_j\) are equal to \(I\); these form a basis of the traceless Hermitian operators, and multiplication by \(i\) gives a basis of \(\mathfrak{su}(2^n)\) \cite[Sec.~1.5]{gilmore2008lie,barutraczka1986}.  The emergence of Pauli strings in the representation theory of \(\mathrm{SU}(2^n)\) is consistent with standard group-theoretic treatments of quantum spin-system Hamiltonians \cite{weyl1950,fultonharris1991,fuchsschweigert1997}.

A convenient choice of maximal torus \(T\subset G\) is the subgroup of diagonal unitaries in the computational basis with determinant \(1\).  Its Lie algebra \(\mathfrak{t}\) consists of traceless diagonal skew-Hermitian matrices.  The associated roots correspond to differences between diagonal entries, and the root spaces are spanned by matrix units \(E_{zw}\) which map \(\ket{w}\) to \(\ket{z}\) and annihilate all other basis vectors; for the complexified algebra \(\mathfrak{sl}(2^n,\mathbb{C})\) this is the standard description of the root decomposition relative to diagonal matrices \cite[Ch.~III]{humphreys1972,serre2001,bourbaki2005}.

In terms of Pauli operators, the diagonal subalgebra \(\mathfrak{t}\) is spanned by tensor products of \(\sigma_z\) and identities.  Each matrix unit \(E_{zw}\) can be written as a linear combination of Pauli strings, so root vectors can be realized as linear combinations of generalized Pauli raising and lowering operators acting on computational basis states; see, for example, \cite{gilmore2008lie,barutraczka1986,georgescu2014quantum}.  These operators generate the off-diagonal part of typical many-spin Hamiltonians.

The weight decomposition of \(V\) is particularly simple: weights are determined by the eigenvalues of \(\sigma_z\) on each site, and the corresponding weight spaces are spanned by computational basis vectors.  Toral Hamiltonian terms act diagonally in this basis, while root terms move amplitude between basis states that differ in one or more spins.

\begin{example}[Root directions as bit flips]\label{ex:bit-flip-roots}
Fix computational basis vectors \(\ket{z},\ket{w}\in\{0,1\}^n\).  The matrix unit \(E_{zw}\) maps \(\ket{w}\) to \(\ket{z}\) and annihilates all other basis vectors.  If \(z\) and \(w\) differ in exactly one bit position \(j\), then \(E_{zw}\) can be written (up to a scalar) as
\[
E_{zw} \propto \bigotimes_{k\neq j} \ket{z_k}\!\bra{w_k}\,\otimes\,\ket{z_j}\!\bra{w_j},
\]
which is a product of single-qubit raising or lowering operators in the \(z\)-basis.  Such an \(E_{zw}\) belongs to a root space of \(\mathfrak{sl}(2^n,\mathbb{C})\) relative to the diagonal torus, and its action connects weight spaces corresponding to configurations differing at site \(j\).  Root activity \(\mathcal{A}_p\) for a Hamiltonian generator \(X\) supported on single-bit flips thus measures the strength of local spin-flip processes in the chain, which are exactly the off-diagonal parts of standard spin Hamiltonians used in quantum simulation.
\end{example}

\subsection{Nearest-neighbour spin-chain Hamiltonians}

Consider the one-dimensional spin chain on \(n\) qubits with Hamiltonian
\[
H = \sum_{j=1}^{n-1} J_j\,\sigma_z^{(j)}\sigma_z^{(j+1)}
+ \sum_{j=1}^{n} h_j\,\sigma_x^{(j)},
\]
where \(\sigma_x^{(j)}\) and \(\sigma_z^{(j)}\) denote the Pauli operators acting on site \(j\).  We view \(H\) as a self-adjoint operator on \(V\) and set the corresponding Hamiltonian generator
\[
X \coloneqq -iH \in \mathfrak{su}(2^n).
\]
The Ising interaction terms \(\sigma_z^{(j)}\sigma_z^{(j+1)}\) are diagonal in
the computational basis, hence lie in \(\mathfrak{t}\).  The transverse-field
terms \(\sigma_x^{(j)}\) flip the \(j\)-th spin and thus correspond to linear combinations of root vectors which move weight by \(\pm 2\) along the \(j\)-th coordinate.

Thus the root component \(X_{\mathrm{root}}\) is supported on a set of roots
indexed by single-spin flips, with coefficients proportional to the transverse
fields \(h_j\).  There exist positive constants \(c_1,c_2\), depending only on the choice of normalization of root vectors and on the representation, such that
\begin{equation}\label{eq:A2-spin-chain}
c_1 \sum_{j=1}^n |h_j|^2
\;\le\;
\mathcal{A}_2(X_{\mathrm{root}})^2
\;\le\;
c_2 \sum_{j=1}^n |h_j|^2.
\end{equation}
Indeed, each transverse-field term contributes to a fixed finite number of root coefficients, and the operator norms \(\|d\rho_{\mathrm{def}}(E_\alpha)\|\) for single-spin flips are bounded above and below by positive constants independent of \(n\).

The curvature functional \(\mathcal{C}(X)\) can be analyzed similarly.  For a single-spin flip at site \(j\), the relevant root evaluation \(\alpha(X_0)\) is a linear combination of the couplings \(J_{j-1}\) and \(J_j\), reflecting the fact that flipping the \(j\)-th spin changes the sign of the Ising interactions on the adjacent bonds.  A direct computation shows that there exist positive constants \(c_1',c_2'\), again independent of \(n\), such that
\begin{equation}\label{eq:C-spin-chain}
c_1' \sum_{j=1}^{n-1} \bigl(|J_j|^2 |h_j|^2 + |J_j|^2|h_{j+1}|^2\bigr)
\;\le\;
\mathcal{C}(X)^2
\;\le\;
c_2' \sum_{j=1}^{n-1} \bigl(|J_j|^2 |h_j|^2 + |J_j|^2|h_{j+1}|^2\bigr).
\end{equation}
The inequalities follow from the local structure of \(H\): each \(\alpha(X_0)\) depends only on the couplings adjacent to the flipped site, and the norms \(\|d\rho_{\mathrm{def}}(E_\alpha)\|\) are uniformly bounded and bounded away from zero.  We omit the straightforward but lengthy bookkeeping.

\begin{example}[Translation-invariant Ising chain]\label{ex:ising-translation-invariant}
Assume \(J_j = J\) and \(h_j = h\) for all \(j\).  Equations~\eqref{eq:A2-spin-chain} and \eqref{eq:C-spin-chain} yield
\[
\mathcal{A}_2(X_{\mathrm{root}})
\simeq
|h|\sqrt{n},
\qquad
\mathcal{C}(X) \simeq |J h|\sqrt{n}.
\]
Moreover,
\[
\mathcal{A}_1(X_{\mathrm{root}})
\simeq
|h|\,n,
\]
since there is one root associated with each site up to normalization.  Plugging into Theorem~\ref{thm:curvature-error} gives a local Hamiltonian-simulation error bound
\[
\bigl\|
e^{t d\rho(X)} - S(t)
\bigr\|
\le
\tilde C(H) |t|^3 \bigl(|J h|\sqrt{n} + |h|n\bigr),
\]
for some representation-dependent constant \(\tilde C(H)>0\).  For fixed \(J,h\) and large \(n\), the \(\mathcal{A}_1\)-term dominates.  This recovers the heuristic that Trotter errors for nearest-neighbour spin-chain Hamiltonian simulation scale at least linearly with system size if one uses a global splitting scheme, but the dependence is expressed here in terms of root activity rather than operator norms \cite{childs2021theory,georgescu2014quantum}.
\end{example}

By Theorem~\ref{thm:curvature-error}, the local error of the symmetric torus--root splitting satisfies, for each fixed \(H\),
\[
\bigl\|
e^{t d\rho(X)} - S(t)
\bigr\|
\le
C(H) |t|^3 \bigl(\mathcal{C}(X) + \mathcal{A}_1(X_{\mathrm{root}})\bigr),
\]
with \(C(H)\) independent of \(n\).  In the translation-invariant case this gives
\[
\bigl\|
e^{t d\rho(X)} - S(t)
\bigr\|
\le
\tilde C(H) |t|^3 \bigl(|J h|\sqrt{n} + |h|n\bigr),
\]
for some \(\tilde C(H)>0\).  To reach an accuracy \(\varepsilon\) over a fixed time \(t\) using a Trotterization with \(r\) steps, one typically requires
\[
r \gtrsim \sqrt{\frac{\tilde C(H)\,t^{3}(|J h|\sqrt{n} + |h|n)}{\varepsilon}},
\]
up to constants determined by the representation and the chosen gate set; see, for instance, \cite{hairer2006geometric,blanescasas2016,childs2021theory} for standard error-accumulation estimates.  In regimes where the transverse field is supported only on a small subset of sites, this scaling improves, as discussed next.

\subsection{Sparse root support and localized activity}

More generally, consider Hamiltonians whose transverse fields are supported on a small subset \(S\subset\{1,\dots,n\}\),
\[
H = \sum_{j=1}^{n-1} J_j\,\sigma_z^{(j)}\sigma_z^{(j+1)}
+ \sum_{j\in S} h_j\,\sigma_x^{(j)}.
\]
In this case the root activity involves only \(|S|\) roots, and the analogues of \eqref{eq:A2-spin-chain} and \eqref{eq:C-spin-chain} give
\[
\mathcal{A}_2(X_{\mathrm{root}})
\simeq
\Bigl( \sum_{j\in S} |h_j|^2\Bigr)^{1/2},
\qquad
\mathcal{A}_1(X_{\mathrm{root}})
\simeq
\sum_{j\in S} |h_j|,
\]
and
\[
\mathcal{C}(X) \simeq
\Bigl( \sum_{j\in S} \bigl(|J_{j-1}|^2 + |J_j|^2\bigr) |h_j|^2\Bigr)^{1/2},
\]
again up to constants independent of \(n\).  Thus the complexity bounds for simulating the Hamiltonian evolution depend only on the number of sites at which the
root activity is nonzero, not on the overall chain length.  In particular, if \(|S|\) is bounded independently of \(n\), then the root-based quantities \(\mathcal{A}_p(X_{\mathrm{root}})\) and \(\mathcal{C}(X)\) remain \(O(1)\) as \(n\to\infty\).

\begin{example}[Few-site driving fields]\label{ex:few-site}
Suppose \(S=\{1,\dots,k\}\) with \(k\) fixed as \(n\to\infty\), and \(|h_j|\le h_0\), \(|J_j|\le J_0\) for all \(j\).  Then
\[
\mathcal{A}_1(X_{\mathrm{root}})
\simeq
\sum_{j=1}^k |h_j|
\le k h_0,
\qquad
\mathcal{C}(X) \lesssim k J_0 h_0,
\]
both independent of \(n\).  Theorem~\ref{thm:curvature-error} then yields a Trotter error bound whose constants are independent of the chain length.  This reflects the physical intuition that driving only a bounded number of sites introduces a simulation cost independent of the system size, provided interactions remain local \cite{sachdev2011quantum,georgescu2014quantum}.
\end{example}

\begin{example}[Heisenberg \(XXX\) chain]\label{ex:heisenberg}
Consider the isotropic Heisenberg Hamiltonian
\[
H = \sum_{j=1}^{n-1} J\,
\bigl(
\sigma_x^{(j)}\sigma_x^{(j+1)}
+ \sigma_y^{(j)}\sigma_y^{(j+1)}
+ \sigma_z^{(j)}\sigma_z^{(j+1)}
\bigr).
\]
The \(\sigma_z^{(j)}\sigma_z^{(j+1)}\) terms lie in the Cartan subalgebra \(\mathfrak{t}\), while \(\sigma_x^{(j)}\sigma_x^{(j+1)}\) and \(\sigma_y^{(j)}\sigma_y^{(j+1)}\) correspond to two-spin flip operators which move weight by \(\pm 2\) on two adjacent sites simultaneously.  The root component \(X_{\mathrm{root}}\) is thus supported on roots associated with two-bit flips, and its activity \(\mathcal{A}_1(X_{\mathrm{root}})\) scales linearly with \(n\), while the curvature \(\mathcal{C}(X)\) scales like \(|J|^2\sqrt{n}\).  The precise constants depend on the normalization of the multi-spin root vectors, but the qualitative picture aligns with numerical and analytic studies of Trotter error in Heisenberg-chain Hamiltonian simulation \cite{childs2021theory}.
\end{example}

\section{Further Directions and Perspectives}
\label{sec:further}

This final section situates the root-space framework within a broader representation-theoretic context and outlines several directions in which the present methods can be developed, all motivated by Hamiltonian simulation and complexity.  The emphasis throughout is on structural themes that naturally extend the results already proved.

\subsection{Optimizing over maximal tori and Cartan data}

The analysis has been carried out relative to a fixed maximal torus \(T\subset G\) with Lie algebra \(\mathfrak{t}\).  Different choices of maximal tori are related by conjugation, and the corresponding root systems are related by Weyl-group symmetries.  The functionals \(\mathcal{A}_p\) and \(\mathcal{C}\) are Weyl-invariant for a fixed \(T\), but for a given Hamiltonian generator \(X\in\mathfrak{g}\) there can exist particularly advantageous choices of Cartan subalgebra in which the root support of \(X\) is sparse or enjoys additional symmetry.

From a representation-theoretic and Hamiltonian-simulation point of view, this suggests studying the behaviour of root activity and curvature under conjugation of the pair \((T,X)\).  One natural object is the infimum of \(\mathcal{A}_p(X)\) and \(\mathcal{C}(X)\) over all maximal tori containing a conjugate of \(X_0\) and all compatible root systems.  Such an optimization would lead to an intrinsic notion of minimal root activity attached to the adjoint orbit of \(X\), and it would refine the present complexity bounds by incorporating the freedom to choose simulation coordinates optimally.  The techniques of Cartan decompositions and polar decompositions in \cite{helgason1978differential,knapp2002lie,bourbaki2005,serre2001,fultonharris1991,duistermaatkolk2000,hilgertneeb2012} provide a natural framework for such an analysis.

\begin{example}[Cartan choices in \(\mathfrak{su}(3)\)]\label{ex:su3-Cartan}
For \(\mathfrak{g}=\mathfrak{su}(3)\), any two Cartan subalgebras are conjugate; a given Hamiltonian generator \(X\) may have a particularly sparse root decomposition in a Cartan where \(X_0\) is diagonal.  For instance, a generic element \(X\) can be conjugated into a diagonal form by the spectral theorem \cite[\S5.4]{hall2015lie}.  In that Cartan, \(X_{\mathrm{root}}=0\) and both \(\mathcal{A}_p(X)\) and \(\mathcal{C}(X)\) vanish (cf.\ Example~\ref{ex:toral-zero-curvature}), whereas in a non-diagonal Cartan the root component may be nontrivial.  This indicates that the infimum of \(\mathcal{A}_p(X)\) over Cartan choices is attained at a diagonalization of \(X\), and that the resulting minimal activity is an invariant of the conjugacy class of the Hamiltonian generator \(X\).
\end{example}

\subsection{Tensor products, induction, and scaling of activity}

Many representations of interest in both representation theory and quantum information are built from simpler ones by tensor products, induction, restriction, or branching.  If \(\rho_1,\rho_2\) are unitary representations of \(G\) and
\(\rho = \rho_1\otimes \rho_2\), then
\[
d\rho(X) = d\rho_1(X)\otimes I + I\otimes d\rho_2(X),
\]
and the weight diagram of \(\rho\) is the Minkowski sum of the weight diagrams of \(\rho_1\) and \(\rho_2\) \cite{goodmanwallach2009,knapp1986rt,brockertomdieck1985,barutraczka1986}.  It is therefore natural to investigate precise inequalities relating \(\mathcal{A}_p^{(\rho)}(X)\) and \(\mathcal{C}^{(\rho)}(X)\) to the corresponding functionals for \(\rho_1\) and \(\rho_2\) when \(X\) is treated as a Hamiltonian generator.

Subadditivity, superadditivity, or multiplicativity properties along such representation-theoretic operations would turn root activity and curvature into robust tools for tracking Hamiltonian complexity across hierarchies of representations.  For example, one expects that in situations where tensor factors act on disjoint subsets of spins, the associated root profiles decompose additively, whereas in regimes with strong entangling interactions the root profiles should detect this increased complexity.  This perspective aligns the present invariants with structural questions in the theory of branching rules, parabolic induction, and restriction to subgroups as developed in \cite{goodmanwallach2009,knapp1986rt,brockertomdieck1985}.

\begin{example}[Tensor products of \(\mathrm{SU}(2)\) spins]\label{ex:tensor-su2-2}
Let \(\rho_{j_1}\) and \(\rho_{j_2}\) be irreducible \(\mathrm{SU}(2)\) representations of spins \(j_1\) and \(j_2\).  Their tensor product decomposes as
\[
\rho_{j_1}\otimes\rho_{j_2}
\simeq
\bigoplus_{J=|j_1-j_2|}^{j_1+j_2} \rho_J,
\]
and the weight diagram of \(\rho_{j_1}\otimes\rho_{j_2}\) is the Minkowski sum of the weight diagrams of \(\rho_{j_1}\) and \(\rho_{j_2}\) \cite[\S14.2]{fultonharris1991}.  For a fixed Hamiltonian generator \(X\in\mathfrak{su}(2)\), the root activity \(\mathcal{A}_p^{(\rho_{j_1}\otimes\rho_{j_2})}(X)\) is controlled by the largest spin \(J\) appearing in the Clebsch--Gordan decomposition, hence by \(j_1+j_2\).  This is consistent with the behavior found in Example~\ref{ex:su2-root-functionals} and illustrates how root activity scales under tensor products in Hamiltonian simulation.
\end{example}

\subsection{Other simulation paradigms and spectral transformations}

The error and complexity analysis above has been formulated in the language of product formulas and a root-gate model.  There exist, however, several alternative Hamiltonian-simulation paradigms, including quantum signal processing and qubitization \cite{low2017optimal,low2019hamiltonian} and linear-combination-of-unitaries (LCU) techniques \cite{childs2012hamiltonian,berry2015simulating}.  These methods encode the action of \(e^{tX}\) through polynomial or rational approximations to the spectral transform of the Hamiltonian generator \(X\), implemented by controlled unitaries and auxiliary registers.

The representation-theoretic structure of \(\rho\) remains relevant in such settings, since the spectral data of \(d\rho(X)\) are governed by the weights and roots of \((\mathfrak{g}_\mathbb{C},\mathfrak{t}_\mathbb{C})\).  It would be natural to define spectral analogues of root activity and curvature (for example, by integrating over coadjoint orbits or weight polytopes) and to relate those quantities to the query complexity of spectral-transform-based Hamiltonian simulation.  Such an investigation would bring together ideas from the orbit method \cite{kirillov1976,kirillov2004,varadarajan1989,guilleminsternberg1984}, the theory of moment maps and symplectic geometry \cite{guilleminsternberg1984,faraut2008}, and geometric complexity theory \cite{mulmuleysohoni2001,procesi2007}, potentially yielding new lower bounds in algorithmic settings that go beyond product formulas.

\subsection{Open directions}

We close by listing a few specific directions in which the present work can be extended.

\begin{itemize}
\item[(a)] \emph{Higher-order splittings.}  The curvature-sensitive analysis of Section~\ref{sec:splittings} can be extended to higher-order Suzuki product formulas.  One expects the resulting error bounds to involve higher-order commutators between the toral and root parts of \(X\), and thus higher-degree polynomials in the root evaluations \(\alpha(X_0)\) and the coefficients \(x_\alpha\).  Making this precise would give a root-theoretic refinement of high-order geometric integrators for unitary flows.

\item[(b)] \emph{Beyond compact groups.}  For noncompact semisimple groups (for example, real forms of \(\mathrm{SL}(n,\mathbb{C})\)), one still has a root-space decomposition but the unitary representation theory is significantly more delicate.  Extending the definitions of \(\mathcal{A}_p(X)\) and \(\mathcal{C}(X)\) to suitable unitary representations of noncompact groups, and understanding their role in Hamiltonian simulation on infinite-dimensional Hilbert spaces, would connect the present results with the rich analytic theory of semisimple Lie groups \cite{knapp2002lie,varadarajan1989,goodmanwallach2009}.

\item[(c)] \emph{Quantitative orbit-closure complexity.}  The dependence of our bounds on the adjoint orbit of \(X\) suggests a link with orbit-closure problems arising in geometric complexity theory \cite{mulmuleysohoni2001,procesi2007}.  It would be interesting to understand whether lower bounds on root activity or curvature can be reformulated in terms of inequalities on moment polytopes or orbit-closure relations, thereby importing techniques from invariant theory into the Hamiltonian-simulation setting.

\item[(d)] \emph{Numerical case studies.}  Finally, it would be valuable to carry out detailed numerical experiments for specific families of spin-chain Hamiltonians, computing \(\mathcal{A}_p(X)\) and \(\mathcal{C}(X)\) explicitly and comparing the root-based bounds with empirical Trotter errors and circuit complexities.  Such studies would help quantify the sharpness of the constants in Theorems~\ref{thm:curvature-error} and~\ref{thm:activity-lower-bound}, and could guide further refinements of the root-based invariants.
\end{itemize}

\section{Conclusion}

The results in this paper lie entirely within the established framework of compact semisimple Lie groups, highest-weight theory, and finite-dimensional unitary representations.  All structural input comes from standard sources such as \cite{helgason1978differential,knapp2002lie,knapp1986rt,humphreys1972,goodmanwallach2009,bourbaki2005,serre2001,fultonharris1991,varadarajan1984,brockertomdieck1985,fuchsschweigert1997,duistermaatkolk2000,hilgertneeb2012,hall2015lie,gilmore2008lie}, and the analytic estimates make use of classical tools including BCH expansions, norm inequalities, and basic properties of unitary representations.  Within this well-established setting, the paper introduces and systematically studies new functorial invariants \(\mathcal{A}_p\) and \(\mathcal{C}\), proves analytic inequalities for exponentials in unitary representations, and formulates a natural conjecture on circuit complexity explicitly expressed in terms of root data, all directly aimed at understanding Hamiltonian simulation.

These features place the work in the domain of modern representation theory, while at the same time providing rigorous answers to quantitative questions motivated by quantum computation and numerical analysis of Hamiltonian systems, and indicating directions for further developments at the interface of Lie theory, Hamiltonian simulation, and complexity theory.

\vskip 0.5in

\noindent{\bf Conflict of interest statement}. The authors have no conflicts of interest to declare.

\vskip 0.5in

\noindent{\bf Data availability}. All data of this work are included in the manuscript.

\end{document}